\newtheoremstyle{case}{}{}{}{}{}{:}{ }{}
\newtheorem{theorem}{Theorem}[section]
\newtheorem{corollary}[theorem] {Corollary}
\newtheorem{definition}[theorem]{Definition}
\newtheorem{example}[theorem]{Example}
\newtheorem{lemma} [theorem]{Lemma}
\newtheorem{proposition}[theorem]{Proposition}
\newtheorem{remark}[theorem]{Remark}
\numberwithin{subcase}{case}
\begin{document}
  
  \label{'ubf'}  
\setcounter{page}{1}                                 
\markboth {\hspace*{-9mm} \centerline{\footnotesize \sc
Cyclic MDS and non-MDS matrices }
               }
              { \centerline                           {\footnotesize \sc  
 T. Chatterjee and A. Laha
 } \hspace*{-9mm}              
    }

\vspace*{-2cm}

\begin{center}
{ 
       { \textbf { A note on cyclic MDS and non-MDS matrices
                               }
       }
\\

\medskip
{\sc Tapas Chatterjee }\\
{\footnotesize Indian Institute of Technology Ropar, Punjab, India.
}\\
{\footnotesize e-mail: {\it tapasc@iitrpr.ac.in}}
\medskip

{\sc Ayantika Laha }\\
{\footnotesize Indian Institute of Technology Ropar, Punjab, India.
}\\
{\footnotesize e-mail: {\it 2018maz0008@iitrpr.ac.in}}
\medskip
}
\end{center}

\thispagestyle{empty} 
\vspace{-.4cm}

\hrulefill
\begin{abstract}
In $1998,$ Daemen {\it{ et al.}} introduced a circulant Maximum Distance Separable (MDS) matrix in the diffusion layer of the Rijndael block cipher,  drawing significant attention to circulant MDS matrices. This block cipher is now universally acclaimed as the AES block cipher. In $2016,$ Liu and Sim introduced cyclic matrices by modifying the permutation of circulant matrices and established the existence of MDS property for orthogonal left-circulant matrices, a notable subclass within cyclic matrices. While circulant matrices have been well-studied in the literature, the properties of cyclic matrices are not. Back in $1961$, Friedman introduced $g$-circulant matrices which form a subclass of cyclic matrices. In this article, we first establish a permutation equivalence between a cyclic matrix and a circulant matrix.
We explore properties of cyclic matrices similar to $g$-circulant matrices. Additionally, we determine the determinant of $g$-circulant matrices of order $2^d \times 2^d$ and prove that they cannot be simultaneously orthogonal and MDS over a finite field of characteristic $2$. Furthermore, we prove that this result holds for any cyclic matrix. 
\end{abstract}
\hrulefill

{\textbf{Keywords}: Circulant Matrices, Cyclic Matrices, g-Circulant Matrices,  Orthogonal Matrices, Permutation Matrices, MDS Matrices.}

{\small \textbf{2020 Mathematics Subject Classification.} Primary: 94A60, 12E20, 15B10 ; Secondary: 15B05 .}\\

\vspace{-.37cm}

\section{\bf Introduction}

Symmetric key primitives, like block ciphers and hash functions, rely on various components that provide ``confusion and diffusion". These concepts were introduced by Claude Shannon in $1949$, in his seminal paper ``Communication Theory of Secrecy Systems" \cite{CS}. Confusion and diffusion layers play a pivotal role in the security and efficiency of cryptographic schemes based on block ciphers and hash functions. The goal of the confusion layer is to hide the statistical relationship between the ciphertext and message whereas, the diffusion layer guarantees that every bit of the message and secret key affects multiple bits of the ciphertext, ensuring that, after several rounds, all output bits depend on every input bit. Therefore, the choice of a diffusion layer plays an important role in providing security against differential cryptanalysis \cite{BS} and linear cryptanalysis \cite{M} which are the two most important cryptanalysis of block ciphers. The strength of a diffusion layer is usually assessed by its branch number. As a result, constructing diffusion layers with higher branch numbers and low-cost implementations is a challenge for cipher design.  Consequently,  many block ciphers use Maximum Distance Separable (MDS) matrices, known for their optimal branch numbers, in their diffusion layers. These matrices are redundant parts of systematic generator matrices of MDS codes and used in popular ciphers like AES \cite{DR}, LED \cite{GPPR}, SQUARE \cite{DKR}, SHARK \cite{RDPBW} as well as hash functions such as Whirlpool \cite{BR} and PHOTON \cite{GPP}.

The construction of diffusion layer using MDS matrices  primarily employs two main approaches: recursive and non-recursive \cite{GPRS}. In recursive construction, a companion or sparse matrix $A$ of order $n \times n$ is used such that $A^n$ is an MDS matrix. Block cipher LED \cite{GPPR}, and hash function PHOTON \cite{GPP} used this recursive based construction. In contrast, non-recursive methods encompass various techniques, such as the use of Cauchy matrices \cite{GR1, MS}, Vandermonde matrices \cite{LF, SDMO}, Hadamard matrices \cite{SKOP}, and their generalizations \cite{PSADR}. Furthermore, inspired by the use of circulant MDS matrix in the diffusion layer of AES, various authors \cite{CL,GR2,GR,LS} have explored the construction of circulant MDS matrices. In \cite{GR}, the authors extended circulant matrices to circulant-like and generalized circulant MDS matrices.

Another interesting property an MDS matrix should possess is being involutory or orthogonal. This is important because the decryption process requires the use of an inverse matrix. In their works \cite{GR1, GR2, GR}, Gupta and Ray studied the existence of circulant MDS matrices with involutory and orthogonal properties. They proved the non-existence of $2^d \times 2^d, d \geq 2$ circulant orthogonal matrices with MDS property over the finite field $\mathbb{F}_{2^m}$. In \cite{GR}, Gupta {\it{et al.}} introduced Type-I circulant-like matrices and demonstrated that they cannot be MDS and orthogonal simultaneously for even orders. Subsequently, in \cite{GPRS}, it was proven that Type-I circulant-like MDS matrices of odd order also lack orthogonality.

In search of orthogonal MDS matrices, Liu and Sim introduced cyclic matrices as a generalization of circulant matrices \cite{LS}. This broader class of matrices is expected to exhibit orthogonal or involutory properties, which certain orders of circulant matrices lack over a characteristic $2$ field.
While, in \cite{LS}, it was proved that $2^d \times 2^d$ left-circulant MDS matrices, a subclass of cyclic matrices, cannot be orthogonal, the orthogonal property for the larger class remained an open question.
Furthermore, in \cite{CL}, the authors introduced  $\theta$-circulant matrices over the quasi polynomial ring. Later, in $2022$, Adhiguna {\it{et al.}} studied the orthogonal property of $\theta$-circulant MDS matrices in \cite{AAYA} over the quasi polynomial ring. Several other authors \cite{TAS, DCG} extended the search for MDS matrices over rings and modules. The authors of \cite{PT1, PT} studied orthogonal MDS matrices over the Galois ring $GR(2^n,k)$. In $2023,$ Mohsenifar {\it{et al.}} \cite{MS1} studied the connection between cyclic matrices and generalized Cauchy matrices. They proved that it is possible to construct circulant MDS matrices of orders $3 \times 3$ and $ 5 \times 5$ under certain conditions on the entries and the finite fields using generalized Cauchy matrices. Recently, in $2024$, the authors in \cite{MEG} introduced a direct construction of circulant MDS matrices with easily computable inverses using Vandermonde matrix.  They also introduced circulant-like matrices and provided a direct construction method for such matrices.

The structure of circulant matrices has been extensively studied in literature \cite{D}. In $1961$, Friedman introduced $g$-circulant matrices as a generalization of the circulant matrix. After that, several authors \cite{AB, D, F} studied the $g$-circulant matrices and their properties. In \cite{D}, a representation of $g$-circulant matrices using permutation matrices is provided. Note that, in \cite{GR2, GR}, authors investigated MDS circulant matrices using the representation of circulant matrix in terms of permutation matrix. Subsequently, in $2016,$ Liu and Sim proved that it is possible to construct left-circulant involutory matrices with MDS property. Additionally, they defined cyclic matrices and a connection between cyclic and circulant matrices. Using the search method, they also claimed the non-existence of involutory cyclic MDS matrices of order $4$ and $8$. In the forthcoming paper \cite{TA2}, the authors studied $g$-circulant matrices with both involutory and MDS property and in \cite{TA, TA1, TA3} Chatterjee {\it{et al.}} studied circulant matrices with semi-involutory and semi-orthogonal properties.

\section{\bf Contributions}
In this article, we establish a structure of cyclic matrices by exploring the connection between cyclic and circulant matrices. This result can be thought of as a generalization of the representation of $g$-circulant matrices introduced in $1961$. 
In \S~\ref{sec:structure_cyclic}, we prove that there exists a unique permutation matrix that converts a cyclic matrix to a circulant matrix. Additionally, we present a general structure of cyclic matrices using permutation matrices. In \S~\ref{sec:orthogonal_property}, we calculate the determinants of $g$-circulant matrices of order $2^d \times 2^d$ over the finite field $\mathbb{F}_{2^m}$. Using this determinant, we prove that $g$-circulant orthogonal matrices of order $2^d \times 2^d$ cannot be MDS over a finite field of characteristic $2$. Moreover, we extend this result for cyclic orthogonal matrices over the finite field of characteristic $2$.

\section{\bf Notations and Preliminaries}
 In this section, we recall some basic notations and definitions related to MDS matrices, circulant matrices, cyclic matrices, etc., along with several useful results essential for the subsequent sections.
 
Let $\mathbb{F}_q$ denote the finite field with $q=p^m$ elements where $p$ is a prime number and $m$ is a positive integer. Consider an $[n,k,d]$ linear error-correcting code $\mathcal{C}$ over $\mathbb{F}_q$. This code $\mathcal{C}$ is a subspace of $\mathbb{F}_q^n$ of dimension $k$ and the minimum distance between any two vectors in $\mathcal{C}$ is $d$. The generator matrix $G$ of $\mathcal{C}$ is a $k \times n$ matrix such that any set of $k$ columns of $G$ are linearly independent. The standard form of $G$ is $[I|A]$ where $I$  is a $k \times k$ identity matrix and $A$ is $k \times (n-k)$ matrix. The Singleton bound sets an upper bound on the minimum distance of a code with given dimensions. It says, for an $[n,k,d]$ code, $d \leq n-k+1$. A code that meets this bound is known as a maximum distance separable (MDS) code.  An alternative definition of an MDS code, as given in \cite{MS}, is the following.
\begin{definition}
An $[n,k,d]$ code $\mathcal{C}$ with generator matrix $G=[I|A]$, where $A$ is a $k \times (n-k)$ matrix is MDS if and only if every square sub-matrix formed from any $i$ rows and any $i$ columns of $A$, for any $i=\{1,2,\hdots,\text{min}(k,n-k)\}$, is non-singular.
\end{definition}
This definition leads to the following characterization of an MDS matrix.
\begin{definition}
A square matrix $A$ is said to be MDS if every square sub-matrix of $A$ is non-singular.
\end{definition}

Next, we outline several properties of MDS matrices, with comprehensive details provided in \cite{GPRS}. Let $A^{-1}$ represent the inverse of matrix $A$, and $A^{T}$ denote its transpose. Additionally, $I$ signifies the identity matrix. For an $k\times k$ matrix $A$, its rows are denoted as $R_i$, where $0 \leq i \leq k-1$, and its columns as $C_j$, where $0 \leq j \leq k-1$. Moreover, $A(i,j)$ denotes the entry at the intersection of the $i$-th row and $j$-th column.

\begin{proposition}
Let $A$ be an MDS matrix. Then $A^{-1}, ~A^{T}$, and $D_1AD_2$ are also MDS matrices where $D_1$ and $D_2$ are two non-singular diagonal matrices.
\end{proposition}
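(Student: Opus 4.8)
The plan is to prove the three claims (for $A^{-1}$, $A^T$, and $D_1 A D_2$) separately, using the submatrix characterization of the MDS property given just above. Throughout I would work directly from the definition: $A$ is MDS if and only if every square submatrix of $A$ is nonsingular, equivalently every minor of $A$ (of every size, formed from any choice of equally many rows and columns) is nonzero.

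First I would handle the transpose, which is the easiest. A square submatrix of $A^T$ obtained by selecting row-index set $I$ and column-index set $J$ is exactly the transpose of the submatrix of $A$ obtained by selecting row-index set $J$ and column-index set $I$. Since a matrix and its transpose have the same determinant, every submatrix of $A^T$ is nonsingular precisely because the corresponding submatrix of $A$ is. Thus $A^T$ is MDS.

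Next I would treat $D_1 A D_2$, where $D_1 = \mathrm{diag}(a_0,\dots,a_{k-1})$ and $D_2 = \mathrm{diag}(b_0,\dots,b_{k-1})$ are nonsingular, so all $a_i, b_j \neq 0$. Left-multiplication by $D_1$ scales row $i$ by $a_i$ and right-multiplication by $D_2$ scales column $j$ by $b_j$, so the $(i,j)$ entry of $D_1 A D_2$ is $a_i\, A(i,j)\, b_j$. For a square submatrix indexed by rows $I$ and columns $J$, the determinant factors out the scalars: it equals $\left(\prod_{i \in I} a_i\right)\left(\prod_{j \in J} b_j\right)$ times the determinant of the corresponding submatrix of $A$. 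Since each $a_i$ and $b_j$ is nonzero and the $A$-minor is nonzero by hypothesis, the product is nonzero, so $D_1 A D_2$ is MDS.

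For the inverse, the cleanest route is to use the adjugate/cofactor description of $A^{-1}$ together with the theory of minors. I expect this to be the main obstacle, since unlike the other two cases it is not a simple entrywise scaling or transposition argument. The key fact is the Jacobi (complementary minor) identity: for an invertible matrix $A$, any $i \times i$ minor of $A^{-1}$ equals, up to a nonzero sign and a factor of $\det(A)^{-1}$, the complementary $(k-i) \times (k-i)$ minor of $A$ (the minor formed from the rows and columns \emph{not} used). Concretely, if we pick rows $I$ and columns $J$ with $|I| = |J| = i$ from $A^{-1}$, then its determinant is $\pm \det(A)^{-1}$ times the minor of $A$ formed from rows indexed by the complement of $J$ and columns indexed by the complement of $I$. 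Since $A$ is MDS, every such complementary minor is nonzero, and $\det(A) \neq 0$; hence every minor of $A^{-1}$ is nonzero, proving $A^{-1}$ is MDS. The plan is therefore to invoke this classical minor identity rather than compute, noting only that one must verify the complement of a row/column index set again has the same cardinality so that a genuine square complementary minor of $A$ appears.
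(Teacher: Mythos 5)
Your three arguments are all correct, but note that the paper itself offers no proof of this proposition at all: it is stated as a known property, with the details deferred to the survey \cite{GPRS}. So there is no ``paper proof'' to match; what you have written is a complete, self-contained justification. The transpose and $D_1AD_2$ cases are handled exactly as one would expect (submatrices of $A^T$ are transposes of submatrices of $A$; a submatrix of $D_1AD_2$ on rows $I$ and columns $J$ is $D_1[I]\,A[I,J]\,D_2[J]$, so its determinant picks up only the nonzero factor $\prod_{i\in I}a_i\prod_{j\in J}b_j$). The interesting choice is the inverse: you invoke Jacobi's complementary-minor identity, $\det\bigl(A^{-1}[I,J]\bigr)=\pm\,\det\bigl(A[J^c,I^c]\bigr)/\det(A)$, which correctly reduces every minor of $A^{-1}$ to a complementary minor of $A$ (with the row/column swap $J^c,I^c$ stated the right way around, and with the cardinality check $|I^c|=|J^c|=k-i$ being immediate). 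This is a legitimate and fully rigorous route; the alternative commonly taken in the coding-theory literature is to observe that $[I\,|\,A]$ and $A^{-1}[I\,|\,A]=[A^{-1}\,|\,I]$ generate the same code, so that the MDS property of the code transfers to $A^{-1}$, and that duality handles $A^T$. Your approach buys independence from any coding-theoretic setup at the price of citing a classical determinant identity; the code-based approach buys conceptual uniformity (all three statements follow from operations that preserve MDS codes) at the price of first establishing the equivalence between MDS matrices and MDS codes. Either is acceptable here.
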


As discussed earlier, the diffusion power of the diffusion layer is quantified by the
branch number (see \cite{DR}, Chapter $9$) of the diffusion matrix.
\begin{definition}
The differential branch number of a linear transformation $\phi$ over the finite field $\mathbb{F}_{2^m}$ is given by
$$\mathcal{B}_d(\phi)=\underset{\textbf{a} \neq \textbf{0},\textbf{a} \in \mathbb{F}_{2^m}^k}{\min} \{\text{Wt}(\textbf{a})+ \text{Wt}(\phi(\textbf{a}))\}=\underset{\textbf{a} \neq \textbf{0}, \textbf{a} \in \mathbb{F}_{2^m}^k}{\min} \{\text{Wt}(\textbf{a})+ \text{Wt}(M\textbf{a})\},$$ where $\phi(\textbf{x})= M \cdot \textbf{x}$.
\end{definition}
\begin{definition}
The linear branch number of a linear transformation $\phi$ over the finite field $\mathbb{F}_{2^m}$ is given by
$$\mathcal{B}_l(\phi)= \underset{\textbf{a} \neq \textbf{0}, \textbf{a} \in \mathbb{F}_{2^m}^k}{\min} \{\text{Wt}(\textbf{a})+ \text{Wt}(M^t\textbf{a})\},$$ where $\phi(\textbf{x})= M \cdot \textbf{x}$.
\end{definition}
Note that the maximal value of $\mathcal{B}_d(\phi)$ and $\mathcal{B}_l(\phi)$ are $k+1$. In general $\mathcal{B}_d(\phi) \neq \mathcal{B}_l(\phi)$, but if a matrix achieves the maximum possible differential or linear branch number, then both branch numbers are equal. Consequently, for an MDS matrix $M$, $\mathcal{B}_d(\phi) =\mathcal{B}_l(\phi) = k+1$.

A matrix $P$ is a permutation matrix if it is obtained from the identity matrix by permuting the rows (or columns).
\begin{definition}
Two matrices $M$ and $M'$ are said to be permutation equivalent if there exist two permutation matrices $P$ and $Q$ such that
$M'=PMQ$. We denote it by $M\sim_{P.E} M'$.
\end{definition}
For permutation equivalent matrices, we have the following result from \cite{LS}.
\begin{proposition}
For any permutation matrices $P$ and $Q$, the branch numbers of $M$ and $PMQ$ are the same.
\end{proposition}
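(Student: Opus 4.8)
The plan is to exploit the single structural fact that a permutation matrix only rearranges the coordinates of a vector and therefore preserves its Hamming weight, while acting bijectively on the set of nonzero vectors. I would prove the invariance of the differential and the linear branch numbers separately, since the linear case additionally involves the transpose.

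First, for the differential branch number, set $M' = PMQ$ and expand
\begin{align*}
\mathcal{B}_d(M') = \underset{\textbf{a} \neq \textbf{0}}{\min}\{\text{Wt}(\textbf{a}) + \text{Wt}(PMQ\textbf{a})\}.
\end{align*}
Since $P$ is a permutation matrix, left multiplication by $P$ merely permutes the entries of the vector $MQ\textbf{a}$, so $\text{Wt}(PMQ\textbf{a}) = \text{Wt}(MQ\textbf{a})$. I would then substitute $\textbf{b} = Q\textbf{a}$: because $Q$ is a permutation matrix it is invertible and maps the set of nonzero vectors bijectively onto itself, and it preserves weight, so $\text{Wt}(\textbf{a}) = \text{Wt}(\textbf{b})$. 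The minimization over $\textbf{a} \neq \textbf{0}$ thus becomes a minimization over $\textbf{b} \neq \textbf{0}$ of the same quantity $\text{Wt}(\textbf{b}) + \text{Wt}(M\textbf{b})$, giving $\mathcal{B}_d(M') = \mathcal{B}_d(M)$.

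For the linear branch number I would use the identity $(PMQ)^t = Q^t M^t P^t$ together with the observation that the transpose of a permutation matrix is again a permutation matrix (indeed, its inverse). Hence $Q^t$ and $P^t$ are permutation matrices, and applying verbatim the same two reductions — discarding the outer left factor $Q^t$ because it preserves weight, and absorbing $P^t$ through the substitution $\textbf{c} = P^t\textbf{a}$ — yields $\mathcal{B}_l(M') = \mathcal{B}_l(M)$.

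There is no serious obstacle in this argument; the only point demanding care is the bookkeeping in the linear case, namely transposing the product in the correct reversed order and recording that the transpose of a permutation matrix remains a permutation matrix, so that the weight-preservation and bijectivity arguments carry over unchanged.
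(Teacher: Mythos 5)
Your proof is correct and complete: both reductions (weight invariance under left multiplication by a permutation matrix, and the bijective change of variables for the right factor) are exactly the facts needed, and your handling of the linear case via $(PMQ)^t = Q^t M^t P^t$ is the right bookkeeping. Note, however, that the paper itself gives no proof of this proposition at all --- it is quoted as a known result from the Liu--Sim reference \cite{LS} --- so there is no internal argument to compare against; your write-up is the standard argument underlying that citation, and it would serve as a self-contained justification if the paper chose to include one.
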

As a consequence of the above proposition, we have the following corollary. 
\begin{corollary}\label{per eq are MDS}
Let $A$ be an MDS matrix. Then for any permutation matrices $P$ and $Q$, $PAQ$ is an MDS matrix.
\end{corollary}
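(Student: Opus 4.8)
The plan is to read the corollary off from the preceding proposition, the decisive ingredient being the characterisation of the MDS property through branch numbers. Since $A$ is MDS, the linear transformation $\phi(\mathbf{x}) = A\mathbf{x}$ attains the maximal differential branch number $\mathcal{B}_d(\phi) = k+1$, exactly as recorded in the discussion of branch numbers above. Applying the preceding proposition to the permutation matrices $P$ and $Q$ then guarantees that the transformation induced by $PAQ$ has the \emph{same} branch number, namely $k+1$. Because a $k \times k$ matrix attains this maximal branch number if and only if it is MDS, we conclude that $PAQ$ is MDS. The steps, in order, are: (i) translate ``$A$ is MDS'' into ``$\mathcal{B}_d = k+1$''; (ii) invoke the proposition to transport this value to $PAQ$; (iii) translate ``$\mathcal{B}_d = k+1$'' back into ``$PAQ$ is MDS''.

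The only step reaching beyond what the proposition literally provides is (iii), the \emph{converse} of the branch-number characterisation. The excerpt records only that an MDS matrix has branch number $k+1$, whereas here I also need that branch number $k+1$ forces the MDS property. This is the standard fact that the branch number of $\phi(\mathbf{x}) = M\mathbf{x}$ coincides with the minimum distance of the code $[I \mid M]$, so that a maximal branch number is precisely the Singleton bound being met; I would simply cite this equivalence from \cite{DR, GPRS}. I expect this to be the only genuine obstacle, and since it is a well-known equivalence rather than a new computation, the corollary is effectively immediate.

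If a self-contained argument using only the submatrix definition of MDS is preferred, I would instead reason directly and bypass branch numbers entirely. Left multiplication by $P$ permutes the rows of $A$ and right multiplication by $Q$ permutes its columns, so any $i \times i$ submatrix of $PAQ$ obtained by selecting $i$ rows and $i$ columns coincides, after reordering its own rows and columns, with an $i \times i$ submatrix of $A$ whose row and column indices are the images of the chosen indices under the permutations encoded by $P$ and $Q$. Reordering rows and columns alters a determinant only by a sign and therefore preserves non-singularity; since every square submatrix of $A$ is non-singular, so is every square submatrix of $PAQ$, whence $PAQ$ is MDS. In this route the sole point requiring care is the index bookkeeping that matches each submatrix of $PAQ$ with the corresponding submatrix of $A$, which is routine.
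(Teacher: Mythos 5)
Your first route is precisely the paper's own argument: the corollary is stated there as an immediate consequence of the preceding proposition on branch-number invariance under permutation equivalence, exactly following your steps (i)--(iii), and the paper, like you, leaves the converse translation (branch number $k+1$ implies MDS) tacit as a standard fact. Your observation that this converse is the only point needing justification is accurate, and your alternative self-contained submatrix argument is also correct, but the paper's intended proof is your first one.
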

Involutory and orthogonal matrices play an important role in the design of lightweight block ciphers due to the use of inverse matrix in the decryption layer.
\begin{definition}
A square matrix $A$ is said to be involutory if $A^2=I$ and orthogonal if $AA^T=A^TA=I$.
\end{definition}

Following that, we give the definitions of circulant matrices and their generalizations. 
We commence with the definition of a circulant and left-circulant matrix. 
\begin{definition}
The square matrix of the form $\begin{bmatrix}
c_0 & c_1 & c_2 & \cdots & c_{k-1}\\
c_{k-1} & c_0 & c_1 & \cdots & c_{k-2}\\
\vdots & \vdots & \vdots & \cdots & \vdots\\
c_1 & c_2 & c_3 & \cdots & c_0
\end{bmatrix}$ is said to be circulant matrix and denoted by $\mathcal{C}=$ circulant$(c_0 , c_1 , c_2 , \hdots , c_{k-1})$. On the other hand, the square matrix of the form $\begin{bmatrix}
c_0 & c_1 & c_2 & \cdots & c_{k-1}\\
c_{1} & c_2 & c_3 & \cdots & c_{0}\\
\vdots & \vdots & \vdots & \cdots & \vdots\\
c_{k-1} & c_0 & c_1 & \cdots & c_{k-2}
\end{bmatrix}$ is said to be left-circulant matrix and denoted by left-circulant$(c_0 , c_1 , c_2 , \hdots , c_{k-1}).$
\end{definition}
The entire circulant matrix is clearly defined by its first row (or column).
We can express the $(i,j)$-th entry of $\mathcal{C}$ as $\mathcal{C}(i,j)=c_{j-i},$ where subscripts are calculated modulo $k$. Furthermore, the entries of $\mathcal{C}$ follow the relationship $\mathcal{C}(i,j)=\mathcal{C}(i+1,j+1)$. Using the property of permutation matrix $P=$ circulant$(0,1,0,\hdots,0)$, a circulant matrix $\mathcal{C}$ can be expressed as
\begin{eqnarray}\label{circulant structure}
\mathcal{C}= \text{circulant}(c_0 , c_1 , c_2 , \hdots , c_{k-1})=c_0I+c_1P+c_2P^2+\cdots +c_{k-1}P^{k-1},
\end{eqnarray}
where $I$ denotes the identity matrix of order $k\times k$.
The structure and properties of circulant matrices have been extensively explored in the literature. Building upon this foundation, B. Friedman extended this theory in 1961 by introducing $g$-circulant matrix \cite{F}. The definition of a $g$-circulant matrix is as follows:

\begin{definition}\label{g-circ def}
 A $g$-circulant matrix of order $k \times k$ is a matrix of the form $$A= g-\text{circulant}(c_0,c_1,\hdots,c_{k-1})=\begin{bmatrix}
c_0 & c_1 & \cdots & c_{k-1}\\
c_{k-g} & c_{k-g+1} & \cdots & c_{k-1-g}\\
c_{k-2g} & c_{k-2g+1} & \cdots & c_{k-1-2g}\\
\vdots& \vdots &\cdots &\vdots\\
c_{g} & c_{g+1} & \cdots & c_{g-1}\\
\end{bmatrix},$$ where all subscripts are taken modulo $k$. 
\end{definition}
For $g=1$, it represents a circulant matrix, and for $g\equiv -1 \pmod{k}$, it takes the form of a left-circulant matrix.
Here are some noteworthy properties of $g$-circulant matrices, with details provided in \cite{AB, D}.

\begin{lemma}\label{gh-circulant}
Let $A$ be $g$-circulant and $B$ $h$-circulant. Then $AB$ is $gh$-circulant.
\end{lemma}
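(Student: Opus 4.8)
The plan is to work directly with the entrywise description of a $g$-circulant matrix that is encoded in Definition~\ref{g-circ def}. First I would read off from the displayed matrix that if $A = g\text{-circulant}(c_0,\dots,c_{k-1})$, then its $(i,j)$ entry is $A(i,j)=c_{j-ig}$, with all subscripts reduced modulo $k$: indeed the $i$-th row is the tuple $(c_{-ig},c_{1-ig},\dots,c_{k-1-ig})$, which matches each row of the definition (e.g. the row $i=1$ reads $c_{k-g},c_{k-g+1},\dots,c_{k-1-g}$). Likewise, writing $B = h\text{-circulant}(c_0',\dots,c_{k-1}')$, we have $B(i,j)=c_{j-ih}'$.

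Next I would expand the matrix product entrywise: $(AB)(i,j)=\sum_{l=0}^{k-1}A(i,l)B(l,j)=\sum_{l=0}^{k-1}c_{l-ig}\,c_{j-lh}'$, where again all indices are taken modulo $k$. The heart of the argument is a single substitution: set $s\equiv l-ig \pmod{k}$. As $l$ runs over a complete residue system modulo $k$, so does $s$, so this is a bijective reindexing that leaves the summation range unchanged. Under it $l=s+ig$, hence $j-lh=j-(s+ig)h=(j-igh)-sh$, and the product becomes $(AB)(i,j)=\sum_{s=0}^{k-1}c_s\,c_{(j-igh)-sh}'$.

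Finally I would observe that the right-hand side depends on the pair $(i,j)$ only through the single residue $j-i(gh)\pmod{k}$. Defining $d_t:=\sum_{s=0}^{k-1}c_s\,c_{t-sh}'$ for $0\le t\le k-1$, we obtain $(AB)(i,j)=d_{\,j-i(gh)}$, which is exactly the entrywise form of a $gh$-circulant matrix established in the first step. Hence $AB = (gh)\text{-circulant}(d_0,\dots,d_{k-1})$, as claimed.

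The only delicate point, and the step I would watch most carefully, is the modular index bookkeeping: one must check that the substitution $s\equiv l-ig$ is genuinely a permutation of $\Z/k\Z$ (so that no terms are gained or lost) and that the exponent $gh$ is read modulo $k$ in the final matrix, matching the convention of Definition~\ref{g-circ def}. An alternative, more structural route is to use the factorization $g\text{-circulant}(c)=Q_g\,\mathcal{C}$, where $Q_g=g\text{-circulant}(1,0,\dots,0)$ and $\mathcal{C}$ is the ordinary circulant with the same first row, together with the identity $Q_gQ_h=Q_{gh}$; however, commuting a circulant past $Q_h$ forces one to assume $\gcd(h,k)=1$, so I would prefer the direct computation above, which needs no such restriction and gives the defining sequence $d_t$ explicitly.
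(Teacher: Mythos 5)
Your proof is correct. The entrywise description $A(i,j)=c_{j-ig}$ (subscripts mod $k$) does follow from Definition \ref{g-circ def}, the reindexing $s\equiv l-ig \pmod{k}$ is a translation of $\Z/k\Z$ and hence a bijection with no hypothesis on any gcd, and the resulting identity $(AB)(i,j)=d_{j-i(gh)}$ with $d_t=\sum_{s}c_s c'_{t-sh}$ is exactly the $gh$-circulant pattern. Note, however, that the paper does not prove this lemma at all: it is quoted as a known property of $g$-circulant matrices, with details deferred to the cited references, so your computation supplements the text rather than parallels it.

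One remark on your closing paragraph. You set aside the structural route on the grounds that commuting a circulant past $Q_h$ requires $\gcd(h,k)=1$. That objection is valid for the particular factorization you describe, but the paper records a better structural tool: Lemma \ref{PA=AP^g}, which characterizes $g$-circulants by $PA=AP^g$ with no restriction on $g$. From it one gets a short proof valid for all $g,h$: iterating $PB=BP^h$ gives $P^gB=BP^{gh}$, hence
\begin{equation*}
P(AB)=(PA)B=(AP^g)B=A(P^gB)=A(BP^{gh})=(AB)P^{gh},
\end{equation*}
so $AB$ is $gh$-circulant by the same lemma. This is essentially the classical argument from the cited literature, and it is just as unconditional as your direct computation while avoiding all index bookkeeping. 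Your version has the compensating advantage of exhibiting the first row $d_0,\dots,d_{k-1}$ of $AB$ explicitly, which the structural proof does not.
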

\begin{lemma}
If $A$ and $B$ are both $g$-circulant matrices, then $AB^{T}$ forms a circulant matrix.
\end{lemma}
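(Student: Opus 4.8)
The plan is to prove the statement by a direct computation on matrix entries, reading the entry formula for a $g$-circulant matrix off Definition~\ref{g-circ def}. Scanning the rows of the displayed matrix there, row $i$ has its $j$-th entry equal to $c_{j-ig}$ (subscripts modulo $k$): the top row gives $A(0,j)=c_j$, the next row $A(1,j)=c_{j-g}$, and so on. So for $A=g\text{-circulant}(a_0,\dots,a_{k-1})$ and $B=g\text{-circulant}(b_0,\dots,b_{k-1})$ I would record the formulas $A(i,j)=a_{j-ig}$ and $B(i,j)=b_{j-ig}$, with all indices taken modulo $k$. Taking the transpose then gives $B^{T}(i,j)=B(j,i)=b_{i-jg}$.

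Next I would multiply out: $(AB^{T})(i,j)=\sum_{l=0}^{k-1}A(i,l)\,B^{T}(l,j)=\sum_{l=0}^{k-1}a_{l-ig}\,b_{l-jg}$. The key step is the substitution $m\equiv l-ig \pmod k$; since this is a bijection of $\Z/k\Z$ as $l$ ranges over all residues, the sum becomes $(AB^{T})(i,j)=\sum_{m=0}^{k-1}a_{m}\,b_{m-(j-i)g}$, where I used $l-jg\equiv m-(j-i)g$. The right-hand side depends on $i$ and $j$ only through the difference $j-i$. Writing $d_{r}=\sum_{m=0}^{k-1}a_{m}\,b_{m-rg}$, I obtain $(AB^{T})(i,j)=d_{j-i}$, which is exactly the defining relation $\mathcal{C}(i,j)=c_{j-i}$ of a circulant matrix recorded in the preliminaries. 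Hence $AB^{T}=\text{circulant}(d_0,d_1,\dots,d_{k-1})$.

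I do not expect a serious obstacle here; the only point requiring care is that the reindexing $m\equiv l-ig$ is legitimate for every $g$, so the argument never needs $g$ to be invertible modulo $k$ and the lemma holds in full generality. As a sanity check (and an alternative route when $\gcd(g,k)=1$), one can instead observe from $B^{T}(i,j)=b_{i-jg}$ that $B^{T}$ is itself $g^{-1}$-circulant, and then invoke Lemma~\ref{gh-circulant} to conclude that $AB^{T}$ is $g\cdot g^{-1}=1$-circulant, that is, circulant; the direct computation above has the advantage of avoiding the invertibility hypothesis.
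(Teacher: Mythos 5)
Your proof is correct, and it is worth noting that the paper itself gives no proof of this lemma at all: it is stated as a known property with details deferred to \cite{AB} and \cite{D}, so your argument fills in what the paper leaves to the literature. The entry formula $A(i,j)=a_{j-ig}$ is read off correctly from Definition~\ref{g-circ def}, the transpose formula $B^{T}(i,j)=b_{i-jg}$ follows, and the reindexing $m\equiv l-ig\pmod k$ is a mere translation of $\Z/k\Z$, hence a bijection for every $g$; the resulting expression $\sum_{m}a_{m}b_{m-(j-i)g}$ visibly depends only on $j-i$, which is exactly the circulant condition. Your observation that this makes the lemma hypothesis-free (no $\gcd(g,k)=1$ needed) is a genuine point of care, since several nearby results in the paper do carry that hypothesis. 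For comparison, the classical proof (the one in \cite{D}, and the one most natural given the paper's own toolkit) is algebraic rather than computational: by Lemma~\ref{PA=AP^g}, $PA=AP^{g}$ and $PB=BP^{g}$; transposing the latter and using $P^{T}=P^{-1}$ gives $P^{g}B^{T}=B^{T}P$, whence $P(AB^{T})=AP^{g}B^{T}=(AB^{T})P$, and a matrix commuting with $P$ is circulant by the $g=1$ case of the same lemma. That route is also valid for arbitrary $g$, so it matches your generality while being shorter; by contrast, your own suggested shortcut via ``$B^{T}$ is $g^{-1}$-circulant'' plus Lemma~\ref{gh-circulant} genuinely does need $g$ invertible modulo $k$, a limitation you correctly flag. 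Either your computation or the commutation argument would serve as a complete proof; yours has the advantage of producing the explicit first row $d_{r}=\sum_{m}a_{m}b_{m-rg}$ of the resulting circulant matrix.
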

\begin{lemma}\label{PA=AP^g}
$A$ is $g$-circulant if and only if $PA=AP^g$, where $P$ is the permutation matrix $P=$ circulant$(0,1,0,\hdots,0)$.
\end{lemma}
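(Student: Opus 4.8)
The plan is to translate the matrix identity $PA = AP^g$ into a statement about individual entries and compare it directly with the entrywise description of a $g$-circulant matrix. The first step is to read off from Definition \ref{g-circ def} the closed form of a general entry. Tracking the index shift in each successive row, one sees that the $i$-th row begins at the symbol $c_{-gi}$, so that $A(i,j) = c_{j-gi}$ with all subscripts taken modulo $k$; I would verify this on the first few rows ($i = 0,1,2$) exactly as they appear in the display and then assert it in general.

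Next I would record how the two sides of the claimed identity act on entries. Since $P = \mathrm{circulant}(0,1,0,\dots,0)$ satisfies $P(i,\ell) = 1$ precisely when $\ell \equiv i+1 \pmod{k}$, left multiplication shifts rows: $(PA)(i,j) = A(i+1,j)$, indices mod $k$. Similarly $P^g(\ell,j) = 1$ precisely when $\ell \equiv j-g \pmod{k}$, so right multiplication by $P^g$ shifts columns: $(AP^g)(i,j) = A(i, j-g)$. These two computations are the only place where the combinatorics of $P$ enters, and getting the direction of each shift right is the one point where care is needed.

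For the forward implication, assume $A$ is $g$-circulant and substitute the entry formula into both expressions: $(PA)(i,j) = A(i+1,j) = c_{j-g(i+1)} = c_{j-gi-g}$, while $(AP^g)(i,j) = A(i,j-g) = c_{(j-g)-gi} = c_{j-gi-g}$. Since these agree for every $i,j$, we conclude $PA = AP^g$.

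For the converse, assume $PA = AP^g$. By the two entry computations this says $A(i+1,j) = A(i,j-g)$ for all $i,j$ modulo $k$. Setting $c_j := A(0,j)$ and inducting on $i$, the base case is immediate and the inductive step gives $A(i+1,j) = A(i,j-g) = c_{(j-g)-gi} = c_{j-g(i+1)}$, so $A(i,j) = c_{j-gi}$ for all $i$. This is exactly the entrywise form of $g\text{-circulant}(c_0,\dots,c_{k-1})$, so $A$ is $g$-circulant, completing the equivalence. I expect no genuine obstacle beyond the disciplined modular index bookkeeping noted above; in particular, no assumption on $\gcd(g,k)$ is needed, since the argument manipulates only the defining symbols $c_j$.
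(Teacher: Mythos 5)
Your proof is correct. Note that the paper itself offers no proof of this lemma --- it is listed among the quoted properties of $g$-circulant matrices with the details deferred to \cite{AB,D} --- so there is no in-paper argument to compare against. Your entrywise verification, resting on the closed form $A(i,j)=c_{j-gi}$ (subscripts mod $k$) and the two shift computations $(PA)(i,j)=A(i+1,j)$ and $(AP^g)(i,j)=A(i,j-g)$, is the standard argument (essentially the one in Davis \cite{D}), and it is complete in both directions: the forward direction is a direct substitution, and the converse correctly recovers the $g$-circulant form by setting $c_j:=A(0,j)$ and inducting on the row index. Your closing remark is also accurate --- the equivalence needs no hypothesis on $\gcd(g,k)$, which matters because the paper's lemma is stated for arbitrary $g$, and the restriction $\gcd(g,k)=1$ enters only later when such matrices are related to cyclic matrices and to the MDS property.
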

The classification of $g$-circulant matrices is divided into two types depending on $\gcd(k,g)$. For the case $\gcd(k,g)=1$, $g$-circulant matrices represent a subclass of cyclic matrices and otherwise, it is not. The case $\gcd(k,g)>1$ is discussed in Section ~\ref{sec:structure_cyclic}.  The notion of cyclic matrices were introduced by Liu and Sim \cite{LS} as a generalization of circulant matrices in $2016$.
Cyclic matrices of order $k \times k$ are defined using a $k$-cycle permutation $\rho$ of its first row where $\rho  \in S_k$, the permutation group on $k$ elements. For instance, consider a $4$-cycle permutation $\rho=\begin{pmatrix}
0 & 1 & 2 & 3\\
2 &0 &3 & 1
\end{pmatrix}$ in $S_4$. This can be written as $\rho=(0 ~2 ~3~1)$ in the cycle notation and we use this notation throughout the paper. The definition of cyclic matrix is the following:

\begin{definition}\label{cyclic def}
For a $k$-cycle $\rho \in S_k$, a matrix $\mathfrak{C}_\rho$ of order $k \times k$ is called cyclic matrix if each subsequent row is $\rho$-permutation of the previous row. We represent this matrix as $\text{cyclic}_\rho(c_0,c_1,c_2,\hdots,c_{k-1})$, where $(c_0,c_1,c_2,\hdots,c_{k-1})$ is the first row of the matrix. The $(i,j)$-th entry of $\mathfrak{C}_\rho$ can be expressed as $\mathfrak{C}_\rho(i,j)=c_{\rho^{-i}(j)}$.
\end{definition}

For example, the matrix $cyclic_\rho(c_0,c_1,c_2,\hdots,c_{k-1})$, where $\rho=(0 ~1 ~ 2 \cdots~ k-1)$ results in a circulant matrix. Similarly, if we use $\rho =(0 ~ k-1 ~1 ~2 \cdots k-2)$ , we obtain a left-circulant matrix.  Note that, a $k$-cycle of the form $\begin{pmatrix} \label{g-cycle}
 0 & 1 & 2 & \cdots & k-g & k-g+1 & \cdots &k-1\\
 g & g+1 & g+2 & \cdots & g+(k-g) & g+(k-g+1) & \cdots & g+k-1
 \end{pmatrix}$, where $(g+i)$ is calculated modulo $k$ and $\gcd(k,g)=1$ can be written as $(0 \quad g \quad {2g\pmod k} \quad {3g\pmod k} \cdots {(k-1)g \pmod k})$. This gives a complete $k$- cycle because of the next lemma. 
 \begin{lemma}\label{gcd result}
Let $S= \{\alpha g \pmod k, ~\alpha =0,1,\hdots ,k-1\}.$ $S$ will be a complete residue system modulo $k$ if and only if $\gcd(k,g)=1$.
\end{lemma}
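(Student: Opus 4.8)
The plan is to reduce the statement to a distinctness (injectivity) question. Since $S$ is the list of the $k$ values $\alpha g \pmod k$ for $\alpha = 0, 1, \ldots, k-1$, and a complete residue system modulo $k$ consists of exactly $k$ pairwise incongruent integers, $S$ is a complete residue system if and only if these $k$ values are pairwise distinct modulo $k$, i.e.\ if and only if the map $\alpha \mapsto \alpha g \pmod k$ is injective on $\{0,1,\ldots,k-1\}$. Because the domain and codomain $\mathbb{Z}/k\mathbb{Z}$ are finite of the same size $k$, injectivity is equivalent to bijectivity, so it suffices to characterize when this map is injective.

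For the forward direction, I would assume $\gcd(k,g)=1$ and suppose $\alpha_1 g \equiv \alpha_2 g \pmod k$ with $0 \le \alpha_1, \alpha_2 \le k-1$. Then $k \mid (\alpha_1 - \alpha_2)g$, and since $\gcd(k,g)=1$ this forces $k \mid (\alpha_1 - \alpha_2)$; as $|\alpha_1 - \alpha_2| < k$, we conclude $\alpha_1 = \alpha_2$. Hence the map is injective, the $k$ residues are distinct, and $S$ is a complete residue system.

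For the converse I would argue by contrapositive. Suppose $d = \gcd(k,g) > 1$. Then $d \mid g$ and $d \mid k$, so for every $\alpha$ we have $d \mid \alpha g$; writing $\alpha g = qk + r$ with $r = \alpha g \pmod k$ gives $d \mid r$ as well. Thus every element of $S$ is a multiple of $d$, so $S$ omits, say, the residue $1$, and cannot be a complete residue system. Equivalently, one can exhibit a collision directly: $(k/d)g = k(g/d) \equiv 0 \pmod k$, so $\alpha = 0$ and $\alpha = k/d$, which are distinct since $1 < d \le k$, produce the same residue, violating injectivity.

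The main obstacle: there is essentially no hard step, as this is the classical fact that multiplication by $g$ permutes $\mathbb{Z}/k\mathbb{Z}$ precisely when $g$ is a unit modulo $k$. The only point requiring a little care is the converse, where one must observe that a common divisor $d > 1$ of $k$ and $g$ constrains every residue $\alpha g \pmod k$ to lie in the proper subgroup $d\mathbb{Z}/k\mathbb{Z}$, which immediately rules out a complete residue system.
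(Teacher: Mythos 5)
Your proof is correct, and it is in fact more complete than the paper's own argument. The forward direction you give (assume $\gcd(k,g)=1$, suppose $\alpha_1 g \equiv \alpha_2 g \pmod k$, deduce $k \mid (\alpha_1-\alpha_2)g$, invoke coprimality to get $\alpha_1=\alpha_2$, and use finiteness to upgrade injectivity to bijectivity) is essentially word-for-word the proof that appears in the paper. However, the paper stops there: despite the lemma being stated as an ``if and only if,'' the published proof only establishes that $\gcd(k,g)=1$ implies $S$ is a complete residue system, and never addresses the converse. Your contrapositive argument supplies exactly the missing half: if $d=\gcd(k,g)>1$, then every element of $S$ is a multiple of $d$ (so, e.g., the residue $1$ is omitted), or more directly, $\alpha=0$ and $\alpha=k/d$ collide since $(k/d)g = k(g/d)\equiv 0 \pmod k$. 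Either observation shows $S$ cannot be a complete residue system, closing the equivalence. So where the paper's proof has a (minor but real) gap relative to its own statement, yours does not; the only cost is a few extra lines.
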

\begin{proof}
Consider the mapping $\phi: \mathbb{Z}_k \rightarrow S$  defined as $\phi(\alpha)=\alpha g \pmod k$. Take arbitrary elements $\alpha_1, \alpha_2 \in \mathbb{Z}_k$ and assume that $\phi(\alpha_1)=\phi(\alpha_2)$. This implies $\alpha_1 g \pmod k=\alpha_2 g \pmod k$ and hence $k$ divides $(\alpha_1-\alpha_2)g$. Since $\gcd(g,k)=1$, this implies $\alpha_1 - \alpha_2= 0 \pmod k$, yielding  $\alpha_1 = \alpha_2$. Therefore $\phi$ is injective and a bijection. Hence proved.
\end{proof}

For the permutation $\rho=(0 \quad g \quad {2g\pmod k} \quad {3g\pmod k} \cdots {(k-1)g \pmod k})$, $\rho^{-1}$ is of the form $\rho^{-1}=\begin{pmatrix}
 0 & 1 & 2 & \cdots  &k-1\\
 k-g & k-g+1 & k-g+2 & \cdots & k-1-g
 \end{pmatrix}$. Thus cyclic matrices corresponding to these cycles are $g$-circulant matrices. An example of this is the following:

\begin{example}\label{example 1}
Consider two $5$-cycles $\rho_1=(0~2~3~1~4)$ and $\rho_2=(0~3~1~4~2)$ in $S_5$. Then 

$\text{cyclic}_{\rho_1}(c_0,c_1,c_2,c_3,c_4)=\begin{bmatrix}
c_0&c_1&c_2&c_3&c_4\\
c_4&c_3&c_0&c_2&c_1\\
c_1&c_2&c_4&c_0&c_3\\
c_3&c_0&c_1&c_4&c_2\\
c_2&c_4&c_3&c_1&c_0
\end{bmatrix}$ and 

$\text{cyclic}_{\rho_2}(c_0,c_1,c_2,c_3,c_4)=\begin{bmatrix}
c_0&c_1&c_2&c_3&c_4\\
c_2&c_3&c_4&c_0&c_1\\
c_4&c_0&c_1&c_2&c_3\\
c_1&c_2&c_3&c_4&c_0\\
c_3&c_4&c_0&c_1&c_2
\end{bmatrix}$. 

Observe that $\text{cyclic}_{\rho_2}(c_0,c_1,c_2,c_3,c_4)$ is a $3$-circulant matrix.
\end{example}
In the next section we discuss some results on $g$-circulant matrices and generalize them to cyclic matrices. In addition, we explore some properties of cyclic matrices.

\section{\bf{Structure of cyclic matrices and their connection with g-circulant matrices\label{sec:structure_cyclic}}}

The classification of $g$-circulant matrices divided into two types depending on greatest common divisor of $k$ and $g$. If $\gcd(k,g)=1$, it becomes apparent that $g$- circulant matrices are essentially cyclic matrices corresponding to the $k$-cycle 
$(0~ g ~~ {2g\pmod k} ~~ {3g\pmod k} \cdots {(k-1)g \pmod k})$.

In \cite{AB,D} a detailed study on the properties of $g$-circulant matrices is provided.  
We first show that we are only interested in the case $\gcd (k,g)=1$. This restriction is vital because if $\gcd (k,g) > 1,$ these $g$-circulant matrices can never be MDS. To justify this, we use Lemma \ref{gcd result}.

\begin{theorem}
Let $A$ be a $g$-circulant matrix of order $k \times k$. If $\gcd (k,g)> 1$, then $A$ cannot be an MDS matrix.
\end{theorem}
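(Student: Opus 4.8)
The plan is to produce a single singular square submatrix of $A$, which by the characterization recalled above (every square submatrix of an MDS matrix must be non-singular) is already enough to conclude that $A$ is not MDS. The key observation is that when $\gcd(k,g) > 1$, two distinct rows of $A$ are forced to coincide, and any $2 \times 2$ submatrix drawn from two equal rows is automatically singular. So the whole argument reduces to locating a repeated row, and for that I would lean directly on Lemma \ref{gcd result}.

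First I would record the entrywise description coming from Definition \ref{g-circ def}: reading subscripts modulo $k$, the $(i,j)$-th entry of $A$ is $A(i,j) = c_{(j - ig) \bmod k}$. Consequently, rows $i_1$ and $i_2$ of $A$ are identical precisely when $(j - i_1 g) \equiv (j - i_2 g) \pmod{k}$ for every column index $j$, that is, when $i_1 g \equiv i_2 g \pmod{k}$. Next I would invoke Lemma \ref{gcd result}: since $\gcd(k,g) > 1$, the set $S = \{\alpha g \pmod{k} : \alpha = 0, 1, \dots, k-1\}$ fails to be a complete residue system modulo $k$, so the $k$ values $ig \bmod k$ cannot all be distinct. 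Hence there exist indices $0 \le i_1 < i_2 \le k-1$ with $i_1 g \equiv i_2 g \pmod{k}$, and by the preceding remark rows $i_1$ and $i_2$ of $A$ are equal. Concretely one may simply take $i_1 = 0$ and $i_2 = k/\gcd(k,g)$, since then $i_2 g = \frac{k}{\gcd(k,g)}\, g$ is a multiple of $k$, and $1 \le i_2 < k$ because $\gcd(k,g) > 1$.

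Finally I would pick any two distinct columns $j_1 \neq j_2$ (possible since $\gcd(k,g) > 1$ forces $k \ge 2$) and form the $2 \times 2$ submatrix on rows $\{i_1, i_2\}$ and columns $\{j_1, j_2\}$. Because rows $i_1$ and $i_2$ agree in every column, this submatrix has the shape $\begin{bmatrix} a & b \\ a & b \end{bmatrix}$ and therefore determinant $ab - ba = 0$. Thus $A$ contains a singular square submatrix and cannot be MDS. Note that this argument uses nothing about the field beyond commutativity, so the conclusion holds over any $\mathbb{F}_q$. There is essentially no hard step here: the only point requiring care is the index bookkeeping modulo $k$ that pins down the repeated rows, and that is exactly what Lemma \ref{gcd result} delivers, so I would expect the write-up to be short and direct.
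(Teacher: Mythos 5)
Your proposal is correct and takes essentially the same approach as the paper: both arguments invoke Lemma~\ref{gcd result} to produce an index $\alpha$ with $\alpha g \equiv 0 \pmod{k}$ (your explicit choice $i_2 = k/\gcd(k,g)$ is exactly the witness the paper uses), forcing row $\alpha$ to coincide with row $0$, so $A$ cannot be MDS. The only difference is presentational: you make explicit the singular $2\times 2$ submatrix arising from the two equal rows, a step the paper leaves implicit.
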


\begin{proof}
Let $\gcd (k,g)=d$. Then there exists $k_1,k_2$ such that $k=dk_1,g=dk_2$ and $\gcd(k_1,k_2)=1$. In a $g$-circulant matrix, the entries satisfy the relation $a_{i,j}=a_{i+1,j+g}$ for $i,j=0,1,2,\hdots,k-1$, with suffixes calculated modulo $k$. Let us consider the first row of $A$ as $(a_{0,0},a_{0,1},a_{0,2}, \hdots, a_{0,n-1})$.  According to Lemma \ref{gcd result}, there exists an integer $\alpha, 1 < \alpha \leq k-1$ such that $\alpha g=0$. Consequently, we have 
\begin{align*}
&A(0,0)= A(1,g)=A(2,2g)=\cdots=A(\alpha,0)=a_{0,0},~\text{and}\\
&A(0,j)= A(1,g+i)=A(2,2g+j)=\cdots=A(\alpha,\alpha g+j)=a_{0,j},~\text{for all}~j=0,1,\hdots, k-1.
\end{align*}
Therefore, first row and $\alpha$-th row are identical and $A$ is not MDS.
\end{proof} 

The next theorem extends the structure defined in Equation (\ref{circulant structure}) to $g$-circulant matrices.
Let $P=$ circulant$(0,1,0,\hdots,0)$ and $Q_g= g$-circulant$(1,0,0,\hdots,0)$. The representation of $g$-circulant matrices is established by the following theorem [\cite{D}, Theorem $5.1.7$]. 
\begin{theorem}\label{structure g circulant}
Let $A= g$-circulant$(c_0,c_1,\hdots,c_{k-1})$ with $\gcd(k,g)=1$. Then $A$ can be expressed as $A=\sum_{i=0}^{k-1} c_iQ_gP^{i}$, where $P=$ circulant$(0,1,0,\hdots,0)$ and $Q_g= g$-circulant$(1,0,0,\hdots,0)$.
\end{theorem}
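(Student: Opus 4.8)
The plan is to verify the matrix identity by reducing everything to a statement about first rows, using the fact that a $g$-circulant matrix is completely determined by its first row through the defining relation $A(i,j)=c_{j-ig}$ (all subscripts modulo $k$). The cleanest route avoids expanding the full sum and instead identifies each summand $Q_gP^{i}$ as a single, explicitly named $g$-circulant matrix, after which the result follows by linearity.

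First I would record the entrywise descriptions of the two generating matrices. Since the first row of $Q_g$ is $(1,0,\ldots,0)$ and $Q_g(i,j)=(Q_g)(0,\,j-ig)$, we get $Q_g(i,j)=1$ exactly when $j\equiv ig\pmod k$ and $0$ otherwise. Since $P=$ circulant$(0,1,0,\ldots,0)$ has $(i,j)$ entry equal to its first-row entry in position $j-i$, the power $P^{s}$ satisfies $P^{s}(i,j)=1$ exactly when $j\equiv i+s\pmod k$. Composing these two indicator conditions gives $(Q_gP^{s})(i,j)=1$ precisely when there is an index $\ell$ with $\ell\equiv ig$ and $j\equiv \ell+s$, that is, when $s\equiv j-ig\pmod k$.

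Next I would identify each product as a $g$-circulant matrix and pin down its first row. By Lemma \ref{gh-circulant}, the product of the $g$-circulant matrix $Q_g$ with the $1$-circulant matrix $P^{i}$ is again $g$-circulant, so $Q_gP^{i}$ lies in the $g$-circulant family and is therefore determined by its first row. Setting $i\mapsto 0$ and $s\mapsto i$ in the composed condition above shows $(Q_gP^{i})(0,j)=1$ iff $j\equiv i\pmod k$, so the first row of $Q_gP^{i}$ is the standard basis vector $e_i$; hence $Q_gP^{i}=g$-circulant$(e_i)$. Finally I would invoke linearity: the assignment $r\mapsto g$-circulant$(r)$ is $\mathbb{F}_q$-linear, because the entry $r_{j-ig}$ depends linearly on the first row $r$. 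Therefore
$$\sum_{i=0}^{k-1}c_i\,Q_gP^{i}=\sum_{i=0}^{k-1}c_i\, g\text{-circulant}(e_i)=g\text{-circulant}\!\left(\sum_{i=0}^{k-1}c_i e_i\right)=g\text{-circulant}(c_0,\ldots,c_{k-1})=A,$$
as claimed. As an alternative that bypasses the linearity step, I could compute the $(i,j)$ entry of $\sum_s c_s Q_gP^{s}$ directly: by the composed indicator condition it equals $\sum_s c_s\,[\,s\equiv j-ig\,]=c_{j-ig}=A(i,j)$, which settles the identity immediately.

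The computations here are elementary, and I expect no deep obstacle; the only real care needed is the bookkeeping of subscripts modulo $k$, in particular checking that the congruences defining $Q_g$ and $P^{s}$ compose to the single congruence $j\equiv ig+s\pmod k$. The hypothesis $\gcd(k,g)=1$ is what guarantees that $Q_g$ is a genuine permutation matrix and that the whole construction stays inside the cyclic family (via Lemma \ref{gcd result}), so I would state it explicitly even though the algebraic identity itself is driven purely by the entry arithmetic above.
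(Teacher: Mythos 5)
Your proof is correct, and it is worth noting that the paper itself gives no proof of Theorem~\ref{structure g circulant} at all: the result is quoted from Davis's book (Theorem 5.1.7 of \cite{D}). So your argument is a genuine addition rather than a variant of an existing one. Both of your routes check out against the paper's conventions: with $A(i,j)=c_{j-ig}$, $Q_g(i,j)=1$ iff $j\equiv ig \pmod{k}$, and $P^{s}(i,j)=1$ iff $j\equiv i+s\pmod{k}$, the composed condition $s\equiv j-ig\pmod{k}$ is exactly right, and the one-line computation
$\bigl(\sum_{s} c_s Q_gP^{s}\bigr)(i,j)=\sum_{s}c_s\,[\,s\equiv j-ig\,]=c_{j-ig}=A(i,j)$
already settles the identity, since exactly one $s\in\{0,1,\hdots,k-1\}$ satisfies the congruence. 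Your alternative route via Lemma~\ref{gh-circulant} (identifying $Q_gP^{i}$ as the $g$-circulant matrix with first row $e_i$ and invoking linearity of $r\mapsto g\text{-circulant}(r)$) is also sound and has the advantage of staying at the level of the paper's stated lemmas rather than raw index bookkeeping. Your closing observation is likewise accurate and worth keeping: the identity is pure entry arithmetic and holds for any $g$; the hypothesis $\gcd(k,g)=1$ is only what makes $Q_g$ a permutation matrix (otherwise some columns of $Q_g$ are zero), which is what the paper actually uses in the later applications (e.g.\ Lemma~\ref{A^2^d g circulant}, where $Q_g^{2^d}=I$ is needed).
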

One can give the analogue of Theorem \ref{structure g circulant} for cyclic matrices. To do so, we first establish the following theorem and the subsequent lemma.

\begin{theorem}\label{relation cyclic circulant}
Let $\mathfrak{C}_\rho(c_0,c_1,\hdots,c_{k-1})$ be a cyclic matrix.  
Then there exists a unique permutation matrix $Q$ such that $\mathfrak{C}_\rho Q=$ circulant$(c_{0},c_{\rho(0)},c_{\rho^2(0)},c_{\rho^3(0)},\hdots,c_{\rho^{k-1}(0)})$.
\end{theorem}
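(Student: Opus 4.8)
The plan is to work at the level of matrix entries and reduce the statement to finding a single permutation that conjugates $\rho$ to the standard $k$-cycle. Recall from Definition \ref{cyclic def} that $\mathfrak{C}_\rho(i,j) = c_{\rho^{-i}(j)}$, while a matrix $\text{circulant}(d_0,\dots,d_{k-1})$ has $(i,j)$-entry $d_{j-i}$ with subscripts taken modulo $k$. Hence the target matrix $\text{circulant}(c_0, c_{\rho(0)}, \dots, c_{\rho^{k-1}(0)})$ has $(i,j)$-entry $c_{\rho^{\,j-i}(0)}$. For a permutation $\tau \in S_k$ I write $Q_\tau$ for its permutation matrix acting on columns by right multiplication, normalised so that $(M Q_\tau)(i,j) = M(i,\tau^{-1}(j))$; with this convention $(\mathfrak{C}_\rho Q_\tau)(i,j) = c_{\rho^{-i}(\tau^{-1}(j))}$. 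The whole problem therefore becomes: find $\tau$ so that $\rho^{-i}(\tau^{-1}(j)) = \rho^{\,j-i}(0)$ for every $i,j$.

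For existence, I would define $\tau$ through $\tau^{-1}(j) = \rho^{j}(0)$. This is a genuine bijection precisely because $\rho$ is a $k$-cycle, so its orbit through $0$ exhausts $\{0,1,\dots,k-1\}$ and the values $\rho^0(0), \rho^1(0), \dots, \rho^{k-1}(0)$ are all distinct --- this is the analogue for $\rho$ of the complete-residue argument in Lemma \ref{gcd result}. With this choice the desired identity is immediate from the group law: $\rho^{-i}(\tau^{-1}(j)) = \rho^{-i}(\rho^{j}(0)) = \rho^{\,j-i}(0)$, which matches the $(i,j)$-entry of the target circulant for all $i,j$ simultaneously. Thus $Q_\tau$ is the required matrix. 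Conceptually, $\tau$ is exactly the permutation conjugating $\rho$ to the standard cycle $(0\,1\,2\,\cdots\,k-1)$, since $\tau\rho\tau^{-1}(j) = \tau(\rho^{j+1}(0)) = j+1$; this is the structural reason a column relabelling turns the $\rho$-shift of a cyclic matrix into the $+1$-shift of a circulant.

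For uniqueness, I would regard $c_0,\dots,c_{k-1}$ as distinct indeterminates (equivalently, demand that $Q$ depend only on $\rho$ and work for every choice of first row). Comparing the first rows, i.e.\ taking $i=0$, already forces $c_{\tau^{-1}(j)} = c_{\rho^{j}(0)}$ for all $j$, hence $\tau^{-1}(j) = \rho^{j}(0)$, which pins down $\tau$ and therefore $Q_\tau$ uniquely.

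The routine parts are the entry bookkeeping and the verification; the one place to be careful is the permutation-matrix convention (left versus right action, and $\tau$ versus $\tau^{-1}$), since an inconsistent choice there silently transposes the result. The genuinely load-bearing observation --- and the step I would isolate first --- is that the single permutation fixed by matching the first row automatically repairs every other row; this works only because of the shift relation $\mathfrak{C}_\rho(i,j) = \mathfrak{C}_\rho(i+1,\rho(j))$, equivalently the composition law $\rho^{-i}\circ\rho^{j} = \rho^{\,j-i}$, and it is where the cyclic structure is essential.
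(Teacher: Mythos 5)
Your proposal is correct, and the permutation matrix you construct is exactly the one in the paper: your convention $(MQ_\tau)(i,j)=M(i,\tau^{-1}(j))$ together with $\tau^{-1}(j)=\rho^{j}(0)$ produces the matrix with $Q(i,j)=1$ precisely when $i=\rho^{j}(0)$, which is the paper's definition of $Q$. The difference is in execution. For existence, the paper computes the rows $i=0$ and $i=1$ explicitly and then runs an induction on $i$ to show $\mathcal{C}(i,j)=\mathcal{C}(i-1,j-1)$; your single computation $c_{\rho^{-i}(\tau^{-1}(j))}=c_{\rho^{-i}(\rho^{j}(0))}=c_{\rho^{\,j-i}(0)}$, matched against the entry formula $d_{j-i}$ of the target circulant, verifies every entry simultaneously and makes the induction unnecessary --- a genuine streamlining, and your remark that $\tau$ conjugates $\rho$ to the standard cycle $(0\,1\,\cdots\,k-1)$ is the right structural explanation of why a column relabelling suffices. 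For uniqueness, your argument is actually more complete than the paper's: the paper only compares $Q$ against competitors $Q'$ that are themselves ``defined using a $k$-cycle $\rho'$,'' so it never excludes an arbitrary permutation matrix, whereas your first-row comparison does --- but, as you correctly flag, only under the convention that $c_0,\dots,c_{k-1}$ are distinct indeterminates (equivalently, that $Q$ must work for every choice of first row). Some such convention is genuinely needed: if the $c_i$ may repeat (for instance all equal), any permutation matrix works and uniqueness as literally stated fails, a point the paper passes over silently.
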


\begin{proof}
Let $\mathfrak{C}_\rho=$ cyclic$(c_0,c_1, c_2, \cdots, c_{k-1})$. Then by Definition \ref{cyclic def}, we have $$\mathfrak{C}_\rho=(c_{\rho^{-i}(j)})=\begin{bmatrix}
c_0 & c_1 & c_2 & \cdots & c_{k-1}\\
c_{\rho^{-1}(0)} & c_{\rho^{-1}(1)} & c_{\rho^{-1}(2)} & \cdots & c_{\rho^{-1}(k-1)}\\
c_{\rho^{-2}(0)} & c_{\rho^{-2}(1)} & c_{\rho^{-2}(2)} & \cdots & c_{\rho^{-2}(k-1)}\\
\vdots & \vdots & \vdots & \cdots & \vdots\\
c_{\rho^{-(k-1)}(0)} & c_{\rho^{-(k-1)}(1)} & c_{\rho^{-(k-1)}(2)} & \cdots & c_{\rho^{-(k-1)}(k-1)}\\
\end{bmatrix}.$$ 

Let us consider the permutation matrix $Q$ such that its $(i,j)$-th entry is given by the rule $$Q(i,j)=\begin{cases}
  1 & \text{if $i=\rho^j(0),j=0,1,\cdots, k-1;$ }\\
  0 & \text{otherwise}.
\end{cases}$$
Since $\rho$ is a cycle of length $k$, we have $\rho^i(0) \neq \rho^l(0)$ for $i,l=\{0,1,2,\hdots,k-1\}, i \neq l$. Therefore $Q$
is a permutation matrix.

Let $\mathfrak{C}_\rho Q= \mathcal{C}$. We prove that $\mathcal{C}$ is a circulant matrix by showing that $i$-th row of $\mathcal{C}$ is a right shift of $(i-1)$-th row for $i=0,1,\hdots, k-1$, i.e., $\mathcal{C}(i,j)= \mathcal{C}(i-1,j-1)$ for $1 \leq i,j \leq k-1$. We will establish this property through induction on $i$.

For the base case, consider $i=0$.

Evaluating $\mathcal{C}(0,0)$ we get, $\mathcal{C}(0,0)=\sum_{l=0}^{k-1}\mathfrak{C}_\rho(0,l)Q(l,0)=\mathfrak{C}_\rho(0,0)Q(0,0)= c_0$ since $Q(0,0)=1$ and $Q(l,0)=0$ for $l=2,3,\hdots,k-1$.

Calculating similarly the other entries of this row, we get $\mathcal{C}(0,j)=\sum_{l=0}^{k-1}\mathfrak{C}_\rho(0,l)Q(l,j)=\mathfrak{C}_\rho(0,\rho^j(0))Q(\rho^j(0),j)= c_{\rho^j(0)}$. This holds because $Q(\rho^j(0),j)=1$ and the other entries of the $j$-th column of $Q$ are $0$.
Consequently, the first row of the matrix  $\mathcal{C}$ is given by $(c_{0},c_{\rho(0)},c_{\rho^2(0)},c_{\rho^3(0)},\hdots,c_{\rho^{k-1}(0)}).$

Next we prove the statement for $i=1$. 

Evaluating $\mathcal{C}(1,j)$ for $j=0,1,\hdots, k-1$, we get $\mathcal{C}(1,j)=\sum_{l=0}^{k-1}\mathfrak{C}_\rho(1,l)Q(l,j)=\mathfrak{C}_\rho(1,\rho^j(0))Q(\rho^j(0),j)= c_{\rho^{-1}(\rho^j(0))}=c_{\rho^{j-1}(0)}= \mathcal{C}(0,j-1)$. This relationship holds true due to the definitions of both $Q$ and $\mathfrak{C}_\rho$. Therefore the row $R_1$ of $\mathcal{C}$ is $(c_{\rho^{k-1}(0)},c_{0},c_{\rho(0)},c_{\rho^2(0)},c_{\rho^3(1)},\hdots,c_{\rho^{k-2}(0)}).$

With this, the validity of the induction hypothesis for both $i=0,1$ are proved. Let's assume it holds for the $i$-th row which can be represented as $(c_{\rho^{k-i}(0)},c_{\rho^{k-i+1}(0)}, \hdots, c_{\rho^{k-i+j-1}(0)},c_{\rho^{k-i+j}(0)}, \hdots,c_{\rho^{k-i-1}(0)})$.

Calculating the first entry of the $i+1$-th row, we get
$\mathcal{C}(i+1,0)=\sum_{l=0}^{k-1}\mathfrak{C}_\rho(i+1,l)Q(l,0)=\mathfrak{C}_\rho(i+1,0)Q(0,0)= c_{\rho^{-(i+1)}(0)}=c_{\rho^{k-i-1}(0)}= \mathcal{C}(i,k-1)$. This holds because $Q(0,0)=1$ and $Q(l,0)=0$ for $l=2,3,\hdots,k-1$. Similarly, by calculating the other entries of the $i+1$-th row, we get  $\mathcal{C}(i+1,j)=\sum_{l=0}^{k-1}\mathfrak{C}_\rho(i+1,l)Q(l,j)=\mathfrak{C}_\rho(i+1,\rho^j(0))Q(\rho^j(0),j)= c_{\rho^{-(i+1)}(\rho^j(0))}=c_{\rho^{-(i+1)+j}(0)}=c_{\rho^{k-(i+1)+j}(0)}=\mathcal{C}(i,j-1)$ for $j=1,2,\hdots,k-1$. Therefore $\mathcal{C}$ is a circulant matrix . 

To establish uniqueness, suppose there exists another permutation matrix $Q'$ defined using $k$-cycle $\rho'$ such that $\mathfrak{C}_{\rho'} Q'=$ circulant$(c_{0},c_{\rho(0)},c_{\rho^2(0)},c_{\rho^3(0)},\hdots,c_{\rho^{k-1}(0)})$. This implies $\rho^i(0)=\rho'^i(0)$ for $i=1,\hdots,k-1$. Since both $\rho$ and $\rho'$ are $k$-cycle permutation, we can conclude that $\rho=\rho'$ and $Q=Q'$.
\end{proof}

\begin{remark}
Note that, the Theorem \ref{relation cyclic circulant} has been discussed in Theorem $3$ of \cite{LS} with a unclear proof. Therefore we record here a rigorous proof for the sake of completeness.
\end{remark}

The matrix circulant$(c_{0},c_{\rho(0)},c_{\rho^2(0)},c_{\rho^3(0)},\hdots,c_{\rho^{k-1}(0)})$ is referred as the circulant matrix associated with the cyclic matrix $\mathfrak{C}_{\rho}$ throughout the paper. 

\begin{lemma}\label{structure}
Consider the permutation matrix $Q$ defined as $$Q(i,j)=\begin{cases}
  1 & \text{if $i=\rho^j(0),j=0,1,\cdots, k-1$; }\\
  0 & \text{otherwise}.
\end{cases}$$ Then $Q^{-1}=\text{cyclic}_\rho (1,0,0,\hdots,0)$.
\end{lemma}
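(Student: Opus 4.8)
The plan is to identify the matrix $\text{cyclic}_\rho(1,0,\hdots,0)$ entrywise and recognize it as the transpose of $Q$, after which the permutation-matrix identity $Q^{-1}=Q^{T}$ finishes the argument. First I would apply Definition \ref{cyclic def} to $\mathfrak{C}_\rho(1,0,\hdots,0)$: its $(i,j)$ entry equals $c_{\rho^{-i}(j)}$, where $c_0=1$ and $c_l=0$ for $l\neq 0$. Hence this entry is $1$ precisely when $\rho^{-i}(j)=0$, equivalently when $j=\rho^{i}(0)$, and $0$ otherwise.

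Next I would compare this description with the entries of $Q^{T}$. By the defining rule for $Q$, we have $Q(i,j)=1$ iff $i=\rho^{j}(0)$, so $Q^{T}(i,j)=Q(j,i)=1$ iff $j=\rho^{i}(0)$. This is exactly the condition obtained above, so $\text{cyclic}_\rho(1,0,\hdots,0)=Q^{T}$ entry by entry. Since $Q$ is a permutation matrix it is orthogonal, which gives $Q^{-1}=Q^{T}=\text{cyclic}_\rho(1,0,\hdots,0)$, as claimed.

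Alternatively, one could verify the conclusion by direct multiplication. Writing $R=\text{cyclic}_\rho(1,0,\hdots,0)$ with $R(l,j)=1$ iff $j=\rho^{l}(0)$, the product $(QR)(i,j)=\sum_{l}Q(i,l)R(l,j)$ collapses to a single term, because $Q(i,l)=1$ forces $l$ to be the unique exponent with $\rho^{l}(0)=i$; then $R(l,j)=1$ forces $j=\rho^{l}(0)=i$, so $(QR)(i,j)=\delta_{ij}$ and $QR=I$. This route avoids invoking orthogonality explicitly but requires the same index tracking.

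The computation is entirely routine, so there is no genuine obstacle. The only point demanding care is the bookkeeping of indices, in particular not confusing the role of the row index $i$ with that of the column index $j$ in the expressions $\rho^{i}(0)$ versus $\rho^{j}(0)$, and recalling that for a permutation matrix the inverse coincides with the transpose, which is what converts the entrywise identity $R=Q^{T}$ into the stated inverse relation $Q^{-1}=R$.
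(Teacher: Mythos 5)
Your proof is correct and matches the paper's argument essentially step for step: both identify the entries of $\text{cyclic}_\rho(1,0,\hdots,0)$ via Definition \ref{cyclic def} as having $1$ exactly at positions $(i,\rho^i(0))$, observe the same for $Q^{T}$, and invoke $Q^{-1}=Q^{T}$ for permutation matrices. The alternative verification by direct multiplication $QR=I$ is a fine supplementary check but not a genuinely different route.
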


\begin{proof}
Since $Q^{-1}=Q^{T},$ we can express $Q^{-1}$ as follows:  $$Q^{-1}(i,j)=\begin{cases}
  1 & \text{if $j=\rho^i(0),i=0,1,\cdots, k-1$; }\\
  0 & \text{otherwise}.
\end{cases}$$ 
Therefore $Q^{-1}$ has $1$ at the positions $\{(0,0),(1,\rho(0)), (2,\rho^2(0)), \hdots, (k-1, \rho^{k-1}(0))\}.$ Also from Definition \ref{cyclic def}, the cyclic matrix $\text{cyclic}_\rho(1,0,0,\hdots,0)$ has $1$ at positions $\{(0,0),(1,\rho(0)), (2,\rho^2(0)), \hdots, (k-1, \rho^{k-1}(0))\}$. Hence they are equal.
\end{proof}
Note that, the matrix $Q_\rho=\text{cyclic}_\rho (1,0,0,\hdots,0)$ satisfy $Q_\rho Q_\rho^T=I$.

An illustration of Theorem \ref{relation cyclic circulant} and Lemma \ref{structure} is the following.
\begin{example}
Consider the cyclic matrix $\mathfrak{C}_{\rho}(c_0,c_1,c_2,c_3,c_4)$ with $\rho=(0~2~3~1~4)$. Then $\mathfrak{C}_{\rho}=\begin{bmatrix}
c_0&c_1&c_2&c_3&c_4\\
c_4&c_3&c_0&c_2&c_1\\
c_1&c_2&c_4&c_0&c_3\\
c_3&c_0&c_1&c_4&c_2\\
c_2&c_4&c_3&c_1&c_1\\
\end{bmatrix}$. Using the construction of the permutation matrix given in Theorem \ref{relation cyclic circulant}, we have $Q=\begin{bmatrix}
1&0&0&0&0\\
0&0&0&1&0\\
0&1&0&0&0\\
0&0&1&0&0\\
0&0&0&0&1\\
\end{bmatrix}.$ Therefore $\mathfrak{C}_{\rho}Q$ is the circulant matrix with the first row $(c_0,c_2,c_3,c_1,c_4)$. Also, $Q^{-1}=\begin{bmatrix}
1&0&0&0&0\\
0&0&1&0&0\\
0&0&0&1&0\\
0&1&0&0&0\\
0&0&0&0&1\\
\end{bmatrix}= \text{cyclic}_{\rho}(1,0,0,0,0)$.
\end{example}

Next, we prove the generalization of Theorem \ref{structure g circulant}.
\begin{theorem}\label{structure cyclic}
Let $\mathfrak{C}_{\rho}(c_0,c_1,c_2,\hdots,c_{k-1})$ be a cyclic matrix.
 Then $\mathfrak{C}_{\rho}=\sum_{i=0}^{k-1}c_{\rho^{i}(0)}P^{i}Q_\rho$, where $Q_\rho= \text{cyclic}_\rho (1,0,0,\hdots,0)$ corresponding to the $k$-cycle $\rho$ and $P=$ circulant$(0,1,0,\hdots,0)$. 
\end{theorem}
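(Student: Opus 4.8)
The plan is to reduce the claim to the two preceding results, Theorem \ref{relation cyclic circulant} and Lemma \ref{structure}, together with the permutation-matrix representation of a circulant matrix recorded in Equation (\ref{circulant structure}). By Theorem \ref{relation cyclic circulant} there is a unique permutation matrix $Q$ with $\mathfrak{C}_\rho Q = \mathcal{C}$, where $\mathcal{C}$ is the associated circulant matrix whose first row is $(c_0, c_{\rho(0)}, c_{\rho^2(0)}, \hdots, c_{\rho^{k-1}(0)})$. The idea is simply to expand $\mathcal{C}$ in powers of $P$, then right-multiply by $Q^{-1}$ and identify $Q^{-1}$ via Lemma \ref{structure}.

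Concretely, first I would write the associated circulant matrix in its standard form. Applying Equation (\ref{circulant structure}) to the first row $(c_{\rho^0(0)}, c_{\rho^1(0)}, \hdots, c_{\rho^{k-1}(0)})$ of $\mathcal{C}$ (using the convention $\rho^0(0)=0$, so the leading coefficient is $c_0$) gives
\begin{equation*}
\mathcal{C} = \sum_{i=0}^{k-1} c_{\rho^i(0)}\, P^{i}.
\end{equation*}
Next I would invert the relation from Theorem \ref{relation cyclic circulant}: since $Q$ is a permutation matrix it is invertible, so $\mathfrak{C}_\rho Q = \mathcal{C}$ yields $\mathfrak{C}_\rho = \mathcal{C} Q^{-1}$. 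By Lemma \ref{structure} we have $Q^{-1} = \text{cyclic}_\rho(1,0,0,\hdots,0) = Q_\rho$. Substituting the expansion of $\mathcal{C}$ and distributing $Q_\rho$ over the finite sum then gives
\begin{equation*}
\mathfrak{C}_\rho = \left(\sum_{i=0}^{k-1} c_{\rho^i(0)}\, P^{i}\right) Q_\rho = \sum_{i=0}^{k-1} c_{\rho^i(0)}\, P^{i} Q_\rho,
\end{equation*}
which is exactly the claimed identity.

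There is essentially no hard step here, since all the work has been done in Theorem \ref{relation cyclic circulant} and Lemma \ref{structure}; the result is a direct corollary obtained by right-multiplying both sides by $Q^{-1}$. The only point demanding a little care is the bookkeeping of the indices: one must confirm that the coefficient attached to $P^i$ in the expansion of $\mathcal{C}$ is precisely $c_{\rho^i(0)}$, matching the ordering of the first row supplied by Theorem \ref{relation cyclic circulant}, and in particular that the $i=0$ term reads $c_0 P^0 Q_\rho = c_0 Q_\rho$. As a sanity check one may specialize $\rho = (0~1~2~\cdots~k-1)$, for which $\rho^i(0)=i$ and $Q_\rho = I$, recovering Equation (\ref{circulant structure}); and $\rho=(0~ g ~~ 2g ~\cdots)$ with $\gcd(k,g)=1$, which recovers the $g$-circulant representation of Theorem \ref{structure g circulant}.
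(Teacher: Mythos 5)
Your proposal is correct and follows essentially the same route as the paper's own proof: invoke Theorem \ref{relation cyclic circulant} to get $\mathfrak{C}_\rho Q = \mathcal{C}$, expand the associated circulant matrix via Equation (\ref{circulant structure}), right-multiply by $Q^{-1}$, and identify $Q^{-1} = Q_\rho$ via Lemma \ref{structure}. The index bookkeeping and sanity checks you add are consistent with, and slightly more explicit than, the paper's version.
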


\begin{proof}
Since $\mathfrak{C}_{\rho}$ is a cyclic matrix, applying Theorem \ref{relation cyclic circulant} we get a circulant matrix $\mathcal{C}=$ circulant$(c_{0},c_{\rho(0)},c_{\rho^2(0)},c_{\rho^3(0)},\hdots,c_{\rho^{k-1}(0)})$ corresponding to $\mathfrak{C}$. The circulant matrix $\mathcal{C}$ can be written as $\mathcal{C}=c_{0}I+c_{\rho(0)}P+c_{\rho^2(0)}P^2+\cdots+c_{\rho^{k-1}(0)}P^{k-1}$, where $P=$ circulant $(0,1,0,\cdots,0)$. Therefore, the cyclic matrix can be written as $\mathfrak{C}_{\rho}=c_{0}Q^{-1}+c_{\rho(0)}PQ^{-1}+c_{\rho^2(0)}P^2Q^{-1}+\cdots+c_{\rho^{k-1}(0)}P^{k-1}Q^{-1}$. By using Lemma \ref{structure}, we get $Q^{-1} = Q_\rho$. Hence proved. 
\end{proof}

\begin{example}\label{example 2}
Consider the matrix $\mathfrak{C}= \text{cyclic}_{\rho_1}(c_0,c_1,c_2,c_3,c_4)$ with  $\rho_1=(0~2~3~1~4)$ from Example \ref{example 1}. Then  $\mathfrak{C}=c_0Q_{\rho_1}+c_2PQ_{\rho_1}+c_3P^2Q_{\rho_1}+c_1P^3Q_{\rho_1}+c_4P^4Q_{\rho_1}$, with $Q_{\rho_1}$ is the matrix $Q_{\rho_1}=\begin{bmatrix}
1&0&0&0&0\\
0&0&1&0&0\\
0&0&0&1&0\\
0&1&0&0&0\\
0&0&0&0&1
\end{bmatrix}$.
\end{example}

Since $g$-circulant matrices with $\gcd(k,g)=1$ are cyclic matrices, we prove that Theorem \ref{structure cyclic} is essentially reduced to Theorem \ref{structure g circulant} for the $k$-cycle  $(0 \quad g \quad{2g\pmod k}\quad{3g\pmod k} \cdots {(k-1)g \pmod k})$ in the following corollary.

\begin{corollary}
Let $\mathfrak{C}_{\rho}=$ cyclic$(c_0,c_1,c_2,\hdots,c_{k-1})$, where $\rho$ is the $k$-cycle permutation 
$(0 ~ g ~{2g\pmod k}~ {3g\pmod k} \cdots {(k-1)g \pmod k})$ with $\gcd(g,k)=1$. Then $\mathfrak{C}_{\rho}=\sum_{i=0}^{k-1} c_iQ_gP^{i}$. 
\end{corollary}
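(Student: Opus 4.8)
The plan is to specialize Theorem~\ref{structure cyclic} to the $g$-cycle $\rho$ of the statement and then rearrange the resulting sum into the form of Theorem~\ref{structure g circulant}. First I would record two identities specific to this $\rho$. Reading off the cycle notation gives $\rho^{i}(0)=ig\pmod k$ for every $i$, so Theorem~\ref{structure cyclic} yields
$$\mathfrak{C}_{\rho}=\sum_{i=0}^{k-1}c_{\rho^{i}(0)}\,P^{i}Q_\rho=\sum_{i=0}^{k-1}c_{\,ig\bmod k}\,P^{i}Q_\rho.$$
Moreover, since a $g$-circulant matrix with $\gcd(k,g)=1$ is exactly the cyclic matrix attached to this $\rho$, the two generators coincide: $Q_\rho=\text{cyclic}_\rho(1,0,\hdots,0)=g\text{-circulant}(1,0,\hdots,0)=Q_g$.

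The second ingredient is the commutation relation supplied by Lemma~\ref{PA=AP^g}. Because $Q_g$ is $g$-circulant, that lemma gives $PQ_g=Q_gP^{g}$, and iterating it $i$ times produces $P^{i}Q_g=Q_gP^{ig}=Q_gP^{\,ig\bmod k}$, where the last equality uses $P^{k}=I$. Substituting into the display above turns each summand into
$$c_{\,ig\bmod k}\,P^{i}Q_g=c_{\,ig\bmod k}\,Q_gP^{\,ig\bmod k}.$$
Finally I would reindex the sum. Setting $j=ig\bmod k$, Lemma~\ref{gcd result} guarantees that as $i$ runs over $\{0,1,\hdots,k-1\}$ the index $j$ runs bijectively over the same set, so the sum collapses to $\sum_{j=0}^{k-1}c_jQ_gP^{j}$, which is precisely the expression asserted by Theorem~\ref{structure g circulant}.

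Since every step is an application of an already-established result, there is no serious obstacle; the only point demanding care is the bookkeeping. One must keep the two conventions straight --- the cyclic representation places the generator $Q_\rho$ to the right of $P^{i}$, whereas the $g$-circulant representation places $Q_g$ to the left --- and then verify that the commutation relation and the change of variable $i\mapsto ig\bmod k$ act in concert, so that moving $Q_g$ past $P^{i}$ produces exactly the shift $P^{\,ig\bmod k}$ whose exponent matches the relabelled coefficient subscript $c_{\,ig\bmod k}$. Confirming that this single reindexing reconciles both sides is the crux of the argument.
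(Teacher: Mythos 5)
Your proposal is correct and follows essentially the same route as the paper's own proof: specialize Theorem~\ref{structure cyclic}, identify $Q_\rho$ with $Q_g$, use Lemma~\ref{PA=AP^g} to commute $P^i$ past $Q_g$ (giving $P^iQ_g = Q_gP^{ig}$), and reindex via Lemma~\ref{gcd result}. Your write-up is in fact somewhat more explicit than the paper's about the iteration of the commutation relation and the bijective change of index, but the ingredients and their order are identical.
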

\begin{proof}
Since $\mathfrak{C}_{\rho}=$ cyclic$(c_0,c_1,c_2,\hdots,c_{k-1})$, applying Theorem \ref{structure cyclic} we get 
\begin{equation*}
\mathfrak{C}_{\rho}=c_0Q_\rho+c_{\rho(0)}PQ_\rho+c_{\rho^2(0)}P^2Q_\rho+\cdots+c_{\rho^{k-1}(0)}P^{k-1}Q_\rho, 
\end{equation*}
where $Q_\rho= \text{cyclic}_\rho (1,0,0,\hdots,0)$. Note that $Q_\rho=Q_g$. Using Lemma \ref{PA=AP^g} and substituting $\rho(0)=g, \rho^i(0)=ig, 1 \leq i \leq k-1$ we get
\begin{equation*}
\mathfrak{C}_{\rho}=c_0Q_g+c_gQ_gP^g+c_{2g}Q_gP^{2g}+c_{3g}Q_gP^{3g}+\cdots+c_{(k-1)g}Q_gP^{(k-1)g}, 
\end{equation*}
where $\{ig, 1 \leq i \leq k-1\}$ are calculated modulo $k$. Using Lemma \ref{gcd result}, we can simply it further to
\begin{equation*}
\mathfrak{C}_{\rho}=c_0Q_g+c_1Q_gP+c_{2}Q_gP^{2}+\cdots+c_{(k-1)}Q_gP^{(k-1)}. 
\end{equation*}
This completes the proof.
\end{proof}

To determine the number of circulant matrices with same branch number, Liu and Sim \cite{LS} introduced an equivalence relation between two circulant matrices $\mathcal{C}=$ circulant$(c_0,c_1,\hdots,c_{k-1})$ and $\mathcal{C}_\sigma=$ circulant$(c_{\sigma(0)},c_{\sigma(1)},\hdots,c_{\sigma(k-1)})$. Their result is noted in the following theorem.
\begin{theorem}
Given two circulant matrices $\mathcal{C}$ and $\mathcal{C}_\sigma$, $\mathcal{C} \sim_{P.E} \mathcal{C}_\sigma $ if and only if $\sigma$ is some index permutation satisfying $\sigma(i)=bi+a \pmod k$, for all $i=0,1,\hdots,k-1, a,b \in \mathbb{Z}_k, ~\gcd(b,k)=1$.
\end{theorem}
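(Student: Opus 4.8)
The plan is to prove both implications using the representation $\mathcal{C}=\sum_{i=0}^{k-1}c_iP^i$ from Equation~(\ref{circulant structure}), treating $c_0,\dots,c_{k-1}$ as distinct symbols so that two circulant matrices coincide exactly when their index patterns coincide. The backbone for both directions is the conjugation identity $Q_g^{-1}PQ_g=P^{g}$, which follows at once from Lemma~\ref{PA=AP^g} applied to the $g$-circulant matrix $Q_g$ (the relation $PQ_g=Q_gP^{g}$), and which is valid whenever $\gcd(g,k)=1$, in which case $Q_g$ is a genuine permutation matrix by Lemma~\ref{gcd result}.

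For the sufficiency direction I would realize an arbitrary affine $\sigma(i)=bi+a$ with $\gcd(b,k)=1$ by composing two elementary moves. First, conjugation by $Q_{b^{-1}}$ acts term by term through $Q_{b^{-1}}^{-1}P^iQ_{b^{-1}}=P^{b^{-1}i}$, so that $Q_{b^{-1}}^{-1}\mathcal{C}Q_{b^{-1}}=\sum_{i}c_iP^{b^{-1}i}=\sum_{j}c_{bj}P^{j}$, a circulant matrix realizing the multiplicative part $j\mapsto bj$. Second, left multiplication by a power of $P$ produces a shift, since $P^{t}\sum_{j}c_{bj}P^{j}=\sum_{j'}c_{b j'-bt}P^{j'}$; choosing $t\equiv -b^{-1}a\pmod k$ gives exactly the first row $c_{bj'+a}$. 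Hence $\mathcal{C}_\sigma=P^{t}Q_{b^{-1}}^{-1}\mathcal{C}Q_{b^{-1}}$, a product of permutation matrices on each side, so $\mathcal{C}\sim_{P.E}\mathcal{C}_\sigma$.

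For the necessity direction, suppose $\mathcal{C}_\sigma=P_1\mathcal{C}P_2$. Left and right multiplication by permutation matrices permute rows and columns, so there are permutations $\alpha,\beta$ of $\Z_k$ with $(P_1\mathcal{C}P_2)(i,j)=\mathcal{C}(\alpha(i),\beta(j))=c_{\beta(j)-\alpha(i)}$, while $\mathcal{C}_\sigma(i,j)=c_{\sigma(j-i)}$. Equating these and invoking the distinctness of the $c_i$ yields the index equation $\beta(j)-\alpha(i)\equiv\sigma(j-i)\pmod k$ for all $i,j$. Specializing to $i=0$ and to $j=0$ expresses $\beta$ and $\alpha$ in terms of $\sigma$, and substituting back collapses the relation to the Cauchy-type identity $\sigma(u+v)=\sigma(u)+\sigma(v)-\sigma(0)$ for all $u,v\in\Z_k$. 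Thus $i\mapsto\sigma(i)-\sigma(0)$ is an additive endomorphism of $\Z_k$, necessarily multiplication by $b:=\sigma(1)-\sigma(0)$, so $\sigma(i)=bi+a$ with $a=\sigma(0)$; finally, since $\sigma$ is a permutation, $i\mapsto bi+a$ must be a bijection of $\Z_k$, which forces $\gcd(b,k)=1$.

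I expect the main obstacle to lie in the necessity direction: specifically, in correctly reducing the matrix identity to the scalar index equation, which hinges both on treating the entries as distinct (so that equal matrices force equal index patterns) and on fixing the permutation conventions for $\alpha,\beta$, and then in recognizing and solving the resulting functional equation. The sufficiency direction is, by contrast, a direct computation once the conjugation identity $Q_g^{-1}PQ_g=P^{g}$ is established.
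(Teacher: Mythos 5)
Your proof is correct, but there is nothing in the paper to compare it against: the paper records this theorem as a result quoted from \cite{LS} (``Their result is noted in the following theorem'') and supplies no proof of its own. Your argument is self-contained and, pleasingly, is built entirely from the paper's own toolkit: the representation $\mathcal{C}=\sum_{i}c_iP^i$ of Equation (\ref{circulant structure}), the relation $PQ_g=Q_gP^g$ of Lemma \ref{PA=AP^g} (giving the conjugation identity $Q_g^{-1}PQ_g=P^g$), and Lemma \ref{gcd result} to guarantee that $Q_{b^{-1}}$ is a genuine permutation matrix. Both directions check out: sufficiency via the explicit factorization $\mathcal{C}_\sigma=\bigl(P^tQ_{b^{-1}}^{-1}\bigr)\,\mathcal{C}\,Q_{b^{-1}}$ with $t\equiv -b^{-1}a \pmod k$, and necessity via the index equation $\beta(j)-\alpha(i)\equiv\sigma(j-i)\pmod k$, whose specializations at $i=0$ and $j=0$ collapse it to $\sigma(u+v)=\sigma(u)+\sigma(v)-\sigma(0)$, forcing $\sigma$ to be affine with $\gcd(b,k)=1$ since $\sigma$ is a bijection. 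One point you flag in passing deserves emphasis, because it is essential rather than a convenience: the theorem is only true when $c_0,\hdots,c_{k-1}$ are treated as independent indeterminates, i.e., as a statement about circulant patterns. If the entries may be arbitrary field elements --- for instance all equal --- then $\mathcal{C}_\sigma=\mathcal{C}$ for every $\sigma$ and the ``only if'' direction fails outright; so the generic-entry reading is a necessary hypothesis, consistent with how \cite{LS} and this paper use the result. A nice by-product of your sufficiency argument is that it exhibits the witnessing permutation matrices concretely as products of powers of $P$ and the matrices $Q_g$, which is more information than the bare statement of permutation equivalence requires.
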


In Theorem \ref{relation cyclic circulant}, we established a permutation equivalence between the cyclic matrix $\mathfrak{C}_{\rho}$ and the circulant matrix $\mathcal{C}=$ circulant $(c_{0},c_{\rho(0)},c_{\rho^2(0)},c_{\rho^3(0)},\hdots,c_{\rho^{k-1}(0)})$, where $\rho$ is a cycle of length $k$. Therefore these two matrices have same branch number. To determine when two cyclic matrices have the same branch number, we need to prove an equivalence relation between them. This is accomplished in the following theorem.

\begin{theorem}
Let $\mathfrak{C}_{\rho_1}$ and $\mathfrak{C}_{\rho_2}$ be two cyclic matrices with the first row $(c_0,c_1,\hdots, c_{k-1})$. Then $\mathfrak{C}_{\rho_1} \sim_{P.E} \mathfrak{C}_{\rho_2}$ if and only if their corresponding circulant matrices are permutation equivalent.
\end{theorem}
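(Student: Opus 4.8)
The plan is to reduce the whole statement to the single structural fact established in Theorem \ref{relation cyclic circulant}, namely that every cyclic matrix is permutation equivalent to its associated circulant matrix, together with the observation that $\sim_{P.E}$ is an equivalence relation. Once both of these are in hand, the theorem becomes a matter of chaining equivalences in both directions.

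First I would fix notation. Write $\mathcal{C}_1 = \text{circulant}(c_0, c_{\rho_1(0)}, \ldots, c_{\rho_1^{k-1}(0)})$ and $\mathcal{C}_2 = \text{circulant}(c_0, c_{\rho_2(0)}, \ldots, c_{\rho_2^{k-1}(0)})$ for the circulant matrices associated with $\mathfrak{C}_{\rho_1}$ and $\mathfrak{C}_{\rho_2}$. By Theorem \ref{relation cyclic circulant} there are (unique) permutation matrices $Q_1, Q_2$ with $\mathfrak{C}_{\rho_1} Q_1 = \mathcal{C}_1$ and $\mathfrak{C}_{\rho_2} Q_2 = \mathcal{C}_2$; taking the left factor to be $I$, this says precisely $\mathfrak{C}_{\rho_1} \sim_{P.E} \mathcal{C}_1$ and $\mathfrak{C}_{\rho_2} \sim_{P.E} \mathcal{C}_2$. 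Next I would record that $\sim_{P.E}$ is an equivalence relation: it is reflexive (take $P = Q = I$), symmetric (if $M' = PMQ$ then $M = P^{-1}M'Q^{-1}$, and the inverse of a permutation matrix is again a permutation matrix), and transitive (a product of permutation matrices is a permutation matrix).

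For the forward implication, assume $\mathcal{C}_1 \sim_{P.E} \mathcal{C}_2$, say $\mathcal{C}_2 = P\mathcal{C}_1 Q$ with $P,Q$ permutation matrices. Substituting $\mathcal{C}_1 = \mathfrak{C}_{\rho_1} Q_1$ and $\mathfrak{C}_{\rho_2} = \mathcal{C}_2 Q_2^{-1}$ yields $\mathfrak{C}_{\rho_2} = P\mathfrak{C}_{\rho_1}(Q_1 Q Q_2^{-1})$, whose right factor is a permutation matrix, so $\mathfrak{C}_{\rho_1} \sim_{P.E} \mathfrak{C}_{\rho_2}$. Conceptually this is just the chain $\mathfrak{C}_{\rho_1} \sim_{P.E} \mathcal{C}_1 \sim_{P.E} \mathcal{C}_2 \sim_{P.E} \mathfrak{C}_{\rho_2}$ read through transitivity and symmetry. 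For the converse, assume $\mathfrak{C}_{\rho_2} = P\mathfrak{C}_{\rho_1} Q$; then $\mathcal{C}_2 = \mathfrak{C}_{\rho_2} Q_2 = P\mathcal{C}_1(Q_1^{-1} Q Q_2)$, again with a permutation matrix on the right, giving $\mathcal{C}_1 \sim_{P.E} \mathcal{C}_2$.

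I do not expect a genuine obstacle here, since all the content is carried by Theorem \ref{relation cyclic circulant}. The only point that deserves explicit care is verifying the symmetry of $\sim_{P.E}$, that is, that inverses (equivalently transposes) of permutation matrices are themselves permutation matrices, as this is exactly what allows the equivalence chain to be traversed in both directions and what turns the one-sided relation $\mathfrak{C}_{\rho_i} Q_i = \mathcal{C}_i$ into a usable two-sided equivalence.
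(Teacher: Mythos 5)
Your proposal is correct and follows essentially the same route as the paper: both proofs invoke Theorem \ref{relation cyclic circulant} to write $\mathfrak{C}_{\rho_i} Q_i = \mathcal{C}_i$, substitute into the assumed permutation equivalence, and absorb the factors $Q_1, Q_2$ (and their inverses) into a single permutation matrix, the only cosmetic difference being your explicit framing of $\sim_{P.E}$ as an equivalence relation chained through the intermediate circulant matrices.
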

\begin{proof}
Let $\mathfrak{C}_{\rho_1}$ be permutation equivalent $\mathfrak{C}_{\rho_2}$. Then there exists permutation matrices $P_1, P_2$ such that $P_1\mathfrak{C}_{\rho_1}P_2=\mathfrak{C}_{\rho_2}$. From Theorem \ref{relation cyclic circulant}, we get two permutation matrices $Q_{\rho_1}$ and $Q_{\rho_2}$ such that $\mathfrak{C}_{\rho_1}Q_{\rho_1}=\mathcal{C}_1$ and $\mathfrak{C}_{\rho_2}Q_{\rho_2}=\mathcal{C}_2$. This implies $P_1 \mathcal{C}_1Q_{\rho_1}^{-1}P_2=\mathcal{C}_2Q_{\rho_2}^{-1}$. This can be written as $P_1 \mathcal{C}_1P_3=\mathcal{C}_2$ where $P_3=Q_{\rho_1}^{-1}P_2Q_{\rho_2}$. Since $P_3$ is also a permutation matrix, we get $ \mathcal{C}_1 \sim_{P.E}  \mathcal{C}_2$.

Conversely, let $\mathcal{C}_1 \sim_{P.E}  \mathcal{C}_2$. Then there exists permutation matrices $P_1, P_2$ such that $P_1\mathcal{C}_1P_2=\mathcal{C}_2$. Since $\mathcal{C}_1 $ and $\mathcal{C}_2$ corresponds to $\mathfrak{C}_{\rho_1}$ and $\mathfrak{C}_{\rho_2}$ respectively, we have $P_1\mathfrak{C}_{\rho_1}Q_{\rho_1}P_2=\mathfrak{C}_{\rho_2}Q_{\rho_2}$ . This implies $P_1\mathfrak{C}_{\rho_1}P_3=\mathfrak{C}_{\rho_2}$ where $P_3=Q_{\rho_1}P_2Q_{\rho_2}^{-1}$. 
\end{proof}

\begin{example}
Let $\mathfrak{C}_{\rho_1}=$ cyclic $(c_0,c_1,c_2,c_3,c_4)$ with $\rho_1= (0~ 2~ 4~ 1~ 3)$ and 

$\mathfrak{C}_{\rho_2}=$ cyclic $(c_0,c_1,c_2,c_3,c_4)$ with $\rho_2=(0~ 3~ 1~ 4~ 2)$.

Then 

$\mathfrak{C}_{\rho_1}=\begin{bmatrix}
c_0 & c_1 & c_2 & c_3 &c_4\\
c_3 &c_4 & c_0 & c_1 & c_2\\
c_1 & c_2 & c_3 &c_4 & c_0\\
c_4 & c_0 & c_1 & c_2 & c_3 \\
c_2 & c_3 & c_4 & c_0 & c_1
\end{bmatrix}$  and $\mathfrak{C}_{\rho_2}=\begin{bmatrix}
c_0 & c_1 & c_2 & c_3 &c_4\\
c_2 &c_2 & c_4 & c_0 & c_1\\
c_4 & c_0 & c_1 &c_2 & c_3\\
c_1 & c_2 & c_3 & c_4 & c_0 \\
c_3 & c_4 & c_0 & c_1 & c_2
\end{bmatrix}$. 

Here $P_1\mathfrak{C}_{\rho_1}P_2=\mathfrak{C}_{\rho_2}$ where $P_1=\begin{bmatrix}
1 & 0& 0 & 0 &0\\
0 & 0 & 0 & 0 & 1\\
0 & 0 & 0 & 1 & 0\\
0 & 0 & 1 & 0 & 0\\
0 & 1 & 0 & 0 & 0
\end{bmatrix}$ and $P_2=I_5$. 

Circulant matrices corresponding to  $\mathfrak{C}_{\rho_1}$ and $\mathfrak{C}_{\rho_2}$ are $\mathcal{C}_1=$ circulant $(c_0,c_2, c_4, c_1, c_3)$ and $\mathcal{C}_2= $ circulant $(c_0,c_3, c_1, c_4,c_2)$ respectively. Calculating $P_1\mathcal{C}_1P_3$ where $P_3=Q_{\rho_1}^{-1}Q_{\rho_2}$, where $Q_{\rho_1}$ and $Q_{\rho_2}$ are as defined in Theorem \ref{relation cyclic circulant}, we get $\mathcal{C}_2.$  This implies $\mathcal{C}_1 \sim_{P.E} \mathcal{C}_2.$ 

On the other way , $\mathcal{C}_1=\begin{bmatrix}
c_0 & c_2& c_4& c_1& c_3\\
c_3 & c_0 & c_2& c_4& c_1\\
c_1 & c_3 & c_0 & c_2& c_4\\
c_4 & c_1 & c_3 & c_0 & c_2\\
c_2 & c_4 & c_1 & c_3 & c_0
\end{bmatrix} $ and $\mathcal{C}_2=\begin{bmatrix}
c_0 & c_3& c_1& c_4& c_2\\
c_2 & c_0 & c_3& c_1& c_4\\
c_4 & c_2 & c_0 & c_3& c_1\\
c_1 & c_4 & c_2 & c_0 & c_3\\
c_3 & c_1 & c_4 & c_2 & c_0
\end{bmatrix}$. Consider $P_1=P_2=\begin{bmatrix}
1 & 0 & 0 & 0 & 0\\
0 & 0 & 0 & 0 & 1\\
0 & 0 & 0 & 1 & 0\\
0 & 0 & 1 & 0 & 0\\
0 & 1 & 0 & 0 & 0
\end{bmatrix}$ we get $P_1\mathcal{C}_1P_2=\mathcal{C}_2$. It is easy to check that $P_3=Q_{\rho_1}P_2Q_{\rho_2}^{-1}=I_5$. Thus $P_1\mathfrak{C}_{\rho_1}I_5=\mathfrak{C}_{\rho_2}$ and this implies $\mathfrak{C}_{\rho_1} \sim_{P.E} \mathfrak{C}_{\rho_2}$.

\end{example}

\section{\bf Results on orthogonal g-circulant matrices\label{sec:orthogonal_property} }

To investigate the orthogonal property of $2^d \times 2^d$ circulant MDS matrices over $\mathbb{F}_{2^m}$, Gupta and Ray \cite{GR} proved that $A^{2^d}$ is a scalar matrix, where $A$ is a circulant matrix. After that, Liu and Sim extended these to left-circulant case in \cite{LS}. Since a left-circulant matrix is symmetric, involutory property and orthogonal property of such a matrix are equivalent.
Note that orthogonal property is not always preserved under permutation equivalence. Therefore, in this section, we study $g$-circulant matrices, with orthogonal property. In Lemma \ref{A^2^d g circulant}, we prove the similar result for  $g$-circulant matrices of order $2^d \times 2^d$ over $\mathbb{F}_{2^m}$. For that, we first require the following lemma.

\begin{lemma}
If $g >1$ be an odd integer then $2^d$ divides $\frac{g^{2^d}-1}{g-1}$.
\end{lemma}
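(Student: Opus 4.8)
The plan is to avoid computing the quotient directly and instead exhibit it as a product of $d$ even factors. The key observation is the telescoping difference-of-squares factorization
\begin{equation*}
g^{2^d}-1 = (g-1)(g+1)(g^2+1)(g^4+1)\cdots\bigl(g^{2^{d-1}}+1\bigr),
\end{equation*}
which follows by iterating $x^2 - 1 = (x-1)(x+1)$ starting from $x = g^{2^{d-1}}$ and working down. Cancelling the factor $g-1$ then gives the clean expression
\begin{equation*}
\frac{g^{2^d}-1}{g-1} = \prod_{i=0}^{d-1}\bigl(g^{2^i}+1\bigr).
\end{equation*}

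With this in hand the divisibility is immediate. Since $g$ is odd, every power $g^{2^i}$ is odd, so each of the $d$ factors $g^{2^i}+1$ is even. A product of $d$ even integers is divisible by $2^d$, which is exactly the claim. (In fact one could sharpen this, since for $i \ge 1$ the factor $g^{2^i}+1$ is congruent to $2 \pmod 8$ while $g+1$ carries whatever extra power of $2$ is present, but only the crude ``$d$ even factors'' bound is needed here.)

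If one prefers to bypass the factorization identity, the same result drops out by induction on $d$: the base case $d=1$ gives $\frac{g^2-1}{g-1}=g+1$, which is even, and the inductive step uses
\begin{equation*}
\frac{g^{2^{d+1}}-1}{g-1} = \bigl(g^{2^d}+1\bigr)\cdot\frac{g^{2^d}-1}{g-1},
\end{equation*}
where the first factor is even and the second is divisible by $2^d$ by the inductive hypothesis. Honestly there is no serious obstacle in this lemma; the only point requiring a line of justification is the telescoping factorization itself, which is routine. The main thing to get right is simply to package the quotient as a product of $d$ manifestly even terms rather than attempting to analyze the geometric sum $1 + g + \cdots + g^{2^d-1}$ directly, where the divisibility by $2^d$ is far less transparent.
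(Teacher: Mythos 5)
Your proof is correct and follows essentially the same route as the paper: both rest on the factorization $\frac{g^{2^d}-1}{g-1}=(g+1)(g^2+1)\cdots\bigl(g^{2^{d-1}}+1\bigr)$ and the observation that each of the $d$ factors is even because $g$ is odd. The added induction variant is fine but redundant; no gap here.
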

\begin{proof}
$\frac{g^{2^d}-1}{g-1}=(g+1)(g^2+1)(g^{2^2}+1)\cdots(g^{2^{d-1}}+1)$. Since 
$g$ is an odd number, $(g+1),(g^2+1),\cdots,(g^{2^{d-1}}+1)$ are all even numbers. Hence, each term is divisible by $2$ and there are $d$ such terms. Therefore $2^d$ divides the product.
\end{proof}

\begin{lemma}\label{A^2^d g circulant}
Let $A=g$-circulant$(c_0,c_1,c_2,\hdots,c_{2^d-1}), g>1$ with $\gcd(g,2^d)=1$ over the finite field $\mathbb{F}_{2^m}$. Then $A^{2^d} =(c_0^{2^d}+c_1^{2^d}+c_2^{2^d}+\cdots+c_{2^d-1}^{2^d})I$.
\end{lemma}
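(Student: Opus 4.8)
The plan is to reduce the statement to a clean polynomial identity in the group algebra $R=\mathbb{F}_{2^m}[x]/(x^{2^d}-1)$ and then prove that identity by induction on $d$, exploiting the characteristic-two factorization $x^{2^d}-1=(x^{2^{d-1}}-1)^2$. Throughout I write $N=2^d$, $P=\text{circulant}(0,1,0,\dots,0)$, $Q_g=g\text{-circulant}(1,0,\dots,0)$, and $f(x)=\sum_{j=0}^{N-1}c_jx^{j}$.

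First I would set up the structure of $A^{N}$. By Theorem \ref{structure g circulant}, $A=\sum_{i}c_iQ_gP^{i}=Q_gC$ with $C=\text{circulant}(c_0,\dots,c_{N-1})$. Lemma \ref{PA=AP^g} gives the commutation rule $P^{a}Q_g=Q_gP^{ga}$, and applying it repeatedly to move every copy of $Q_g$ to the far left turns $A^{N}=(Q_gC)^{N}$ into $A^{N}=Q_g^{N}\prod_{i=0}^{N-1}C^{(g^{i})}$, where $C^{(g^{i})}=\sum_j c_jP^{g^{i}j}$ is the circulant matrix with associated polynomial $f(x^{g^{i}})$. Two facts then finish the reduction. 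By Lemma \ref{gh-circulant} the matrix $Q_g^{N}$ is $g^{N}$-circulant, and since $\gcd(g,2^d)=1$ forces $g$ odd we have $g^{N}\equiv 1\pmod{N}$ (the group $(\mathbb{Z}/2^d)^{*}$ has order $2^{d-1}$), so $Q_g^{N}=I$. Since the $C^{(g^{i})}$ are all polynomials in $P$ they commute and multiply as polynomials modulo $x^{N}-1$. Hence the whole claim is equivalent to the identity
\[
\prod_{i=0}^{N-1}f(x^{g^{i}})=\sum_{j=0}^{N-1}c_j^{N}\quad\text{in }R=\mathbb{F}_{2^m}[x]/(x^{N}-1).
\]

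To prove this identity I would induct on $d$. Using $g^{2^{d-1}}\equiv 1\pmod{N}$, the factor indexed by $i+2^{d-1}$ equals the factor indexed by $i$, so the product is a perfect square $\Pi_1^{2}$ with $\Pi_1=\prod_{i=0}^{2^{d-1}-1}f(x^{g^{i}})$. Reducing $\Pi_1$ modulo $x^{2^{d-1}}-1$ collapses $f$ to the polynomial $\bar f(x)=\sum_{r=0}^{2^{d-1}-1}(c_r+c_{r+2^{d-1}})x^{r}$ and turns the product into exactly the level-$(d-1)$ instance of the identity; the induction hypothesis then gives $\Pi_1\equiv\sum_{j}c_j^{2^{d-1}}=:\gamma\pmod{x^{2^{d-1}}-1}$, i.e. $\Pi_1=\gamma+(x^{2^{d-1}}-1)h$ for some $h\in R$. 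Squaring and using $(x^{2^{d-1}}-1)^{2}=x^{N}-1=0$ in $R$ kills the remainder, leaving $\Pi_1^{2}=\gamma^{2}=\bigl(\sum_j c_j^{2^{d-1}}\bigr)^{2}=\sum_j c_j^{N}$ by the Frobenius identity, as required. The base case $d=0$ (or $d=1$) is immediate.

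The steps involving the structure theorem, the commutation relation, and $Q_g^{N}=I$ are routine bookkeeping. The delicate point — and the one I would write out most carefully — is the inductive reduction of the polynomial identity: one must check that reducing modulo $x^{2^{d-1}}-1$ really produces the level-$(d-1)$ product with the collapsed coefficients $c_r+c_{r+2^{d-1}}$ (this uses $g^{i}j\equiv g^{i}(j\bmod 2^{d-1})\pmod{2^{d-1}}$ and $\bar g=g\bmod 2^{d-1}$ odd), and that the characteristic-two squaring then annihilates all off-diagonal contributions at once. This is where the cancellation of the non-scalar terms actually happens; the preceding lemma ($2^{d}\mid\frac{g^{2^{d}}-1}{g-1}=\sum_{i=0}^{N-1}g^{i}$) supplies an alternative, more computational route in which one expands $A^{N}$ directly and uses the divisibility to show that the ``all-equal-index'' terms land on the identity while the remaining terms cancel in pairs under the shift-by-$2^{d-1}$ involution.
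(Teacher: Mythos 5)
Your proof is correct, but it takes a genuinely different route from the paper's. The paper also starts from Theorem \ref{structure g circulant} and also disposes of $Q_g^{2^d}$ via $g^{2^d}\equiv 1\pmod{2^d}$, but its core step is to expand $A^{2^d}=\bigl(\sum_i c_iQ_gP^i\bigr)^{2^d}$ termwise as $\sum_i c_i^{2^d}(Q_gP^i)^{2^d}$ (the characteristic-$2$ Frobenius identity) and then to evaluate $(Q_gP^i)^{2^d}=Q_g^{2^d}P^{i(g^{2^d}-1)/(g-1)}=I$ using the preceding divisibility lemma $2^d\mid\frac{g^{2^d}-1}{g-1}$; your argument never uses that lemma, replacing it with the induction on $d$ and the factorization $(x^{2^{d-1}}-1)^2=x^{2^d}-1$. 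The two proofs differ exactly at the delicate point: the paper's termwise expansion is licensed only when the summands pairwise commute, and here they do not, since $(Q_gP^i)(Q_gP^j)=Q_g^2P^{gi+j}$ while $(Q_gP^j)(Q_gP^i)=Q_g^2P^{gj+i}$, and these agree only when $(g-1)(i-j)\equiv 0\pmod{2^d}$; so for $g>1$ the paper's computation needs an additional argument that the cross terms cancel. Your proof supplies precisely the missing commutativity \emph{before} taking powers: collecting the $Q_g$'s on the left via $P^aQ_g=Q_gP^{ga}$ reduces everything to a product of circulant matrices, i.e.\ to the commutative ring $\mathbb{F}_{2^m}[x]/(x^{2^d}-1)$, where the squaring argument is unimpeachable. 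What the paper's route buys is brevity and a direct display of the scalar; what yours buys is rigor in the non-commutative setting (in effect it repairs the paper's proof), plus the intermediate formula $A^{2^d}=Q_g^{2^d}\prod_{i=0}^{2^d-1}C^{(g^i)}$, which isolates where $\gcd(g,2^d)=1$ and the characteristic enter. One presentational caution for your write-up: state the polynomial identity, and run the induction, for \emph{all} multipliers coprime to the modulus, including $1$, since $\bar g=g\bmod 2^{d-1}$ can equal $1$ even when $g>1$; the $\bar g=1$ instance is just the circulant (Gupta--Ray) case and is handled by the same induction, so this is harmless but should be explicit.
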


\begin{proof}
$A= g$-circulant$(c_0,c_1,c_2,\hdots,c_{2^d-1})$. Therefore by applying  Theorem \ref{structure g circulant}, we have 
\begin{equation*}
A=c_0Q_g+c_1Q_gP+c_2Q_gP^2+\cdots+c_{2^d-1}Q_gP^{2^d-1}.
\end{equation*}
Therefore 
\begin{align*}
A^{2^d}&=(c_0Q_g+c_1Q_gP+c_2Q_gP^2+\cdots+c_{2^d-1}Q_gP^{2^d-1})^{2^d}\\
&=c_0^{2^d}Q_g^{2^d}+c_1^{2^d}(Q_gP)^{2^d}+c_2^{2^d}(Q_gP^2)^{2^d}+\cdots+c_{k-1}^{2^d}(Q_gP^{k-1})^{2^d}.
\end{align*}
Since $P=$ circulant$(0,1,0,\hdots,0)$ is a $2^d \times 2^d$ matrix, we can easily see that $P^{2^d}=I$. Additionally, $Q_g= g$-circulant$(1,0,0,\hdots,0)$ is a $2^d \times 2^d$ matrix. By applying Lemma \ref{gh-circulant}, we can say $Q_g^{2^d}$ is a $g^{2^d}$-circulant matrix. Since $g^{\phi({2^d})}=g^{2^d-1} \equiv 1 \pmod {2^d}$, where $\phi$ is the Euler's phi function, we have $g^{2^d} \equiv 1 \pmod {2^d}$ by squaring both sides. This implies, $Q_g^{2^d}= I$. Therefore, 
\begin{align*}
A^{2^d}&=c_0^{2^d}Q_g^{2^d}+c_1^{2^d}Q_g^{2^d}P^{\frac{g^{2^d}-1}{g-1}}+c_2^{2^d}Q_g^{2^d}P^{\frac{2(g^{2^d}-1)}{g-1}}+c_3^{2^d}Q_g^{2^d}P^{\frac{3(g^{2^d}-1)}{g-1}}+\cdots \\&+c_{k-1}^{2^d} Q_g^{2^d}P^{\frac{(k-1)(g^{2^d}-1)}{g-1}}.
\end{align*} 
Since $\frac{n(g^{2^d}-1)}{g-1} \equiv 0 \pmod {2^d},$ we get $P^{\frac{n(g^{2^d}-1)}{g-1}}=I$. Therefore $A^{2^d}=(c_0^{2^d}+c_1^{2^d}+c_2^{2^d}+\cdots+c_{2^d-1}^{2^d})I$.
\end{proof}
Using this lemma we can say about determinant of $g$-circulant matrices.

\begin{lemma}\label{det g circulant}
Let $A=g$-circulant$(c_0,c_1,c_2,\hdots,c_{2^d-1})$ with $\gcd(g,2^d)=1$ over the finite field $\mathbb{F}_{2^m}$.  Then $\det (A) =(\sum_{i=0}^{2^d-1} c_i)^{2^d}.$
\end{lemma}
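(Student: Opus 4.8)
The plan is to build directly on Lemma~\ref{A^2^d g circulant}, which already gives $A^{2^d} = \lambda I$ for an explicit scalar $\lambda$. The strategy is to take determinants of both sides of this identity and then recover $\det A$ itself by inverting an appropriate power of the Frobenius map, which is legitimate precisely because we work in characteristic $2$.

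First I would rewrite the scalar from Lemma~\ref{A^2^d g circulant} in the form demanded by the statement. Over $\mathbb{F}_{2^m}$ the Frobenius map $x \mapsto x^2$ is a ring homomorphism, so iterating it $d$ times yields the ``freshman's dream'' identity $\left(\sum_{i=0}^{2^d-1} c_i\right)^{2^d} = \sum_{i=0}^{2^d-1} c_i^{2^d}$. Consequently Lemma~\ref{A^2^d g circulant} can be read as $A^{2^d} = \lambda I$ with $\lambda = \left(\sum_{i=0}^{2^d-1} c_i\right)^{2^d}$.

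Next I would take determinants. Since the determinant is multiplicative, $\det(A^{2^d}) = (\det A)^{2^d}$, and since $\lambda I$ is a $2^d \times 2^d$ scalar matrix, $\det(\lambda I) = \lambda^{2^d}$. Equating the two gives $(\det A)^{2^d} = \lambda^{2^d}$. It then remains to cancel the exponent $2^d$ from both sides, which is the final and only substantive step: the map $x \mapsto x^{2^d}$ on $\mathbb{F}_{2^m}$ is the $d$-fold composite of the Frobenius automorphism and hence is itself an injective field homomorphism. Applying this injectivity to $(\det A)^{2^d} = \lambda^{2^d}$ forces $\det A = \lambda = \left(\sum_{i=0}^{2^d-1} c_i\right)^{2^d}$, which is exactly the claim.

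I expect the last step to be the delicate one to state carefully, since in an arbitrary field one cannot deduce $a = b$ from $a^{2^d} = b^{2^d}$; the argument genuinely relies on being in characteristic $2$, where $x \mapsto x^{2^d}$ has trivial kernel (as $x^{2^d} = 0$ implies $x = 0$) and is therefore injective. Everything preceding that, namely the passage from Lemma~\ref{A^2^d g circulant} to the determinant identity, is routine bookkeeping once the freshman's dream is invoked.
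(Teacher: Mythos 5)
Your proposal is correct and follows essentially the same route as the paper's proof: take determinants of the identity $A^{2^d} = \lambda I$ from Lemma~\ref{A^2^d g circulant}, then cancel the exponent $2^d$ using the injectivity of the $d$-fold Frobenius map, with the freshman's dream identity $\sum_{i=0}^{2^d-1} c_i^{2^d} = \bigl(\sum_{i=0}^{2^d-1} c_i\bigr)^{2^d}$ connecting the two forms of the scalar. In fact you are slightly more careful than the paper, which passes from $\triangle^{2^d} = \bigl(\sum c_i^{2^d}\bigr)^{2^d}$ to $\triangle = \sum c_i^{2^d}$ with a bare ``this implies,'' leaving the Frobenius-injectivity justification implicit.
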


\begin{proof}
Let $A$ be $g$-circulant$(c_0,c_1,\hdots,c_{2^d-1})$ and  $\det A= \bigtriangleup$. Then $\bigtriangleup^{2^d}=(\det A)^{2^d}=\det(A^{2^d})$. From Lemma \ref{A^2^d g circulant} and Lemma $4$ of \cite{GR}, $A^{2^d}=(\sum_{i=0}^{2^d-1} c_i^{2^d})I$. So, $\bigtriangleup^{2^d}=(\sum_{i=0}^{2^d-1} c_i^{2^d})^{2^d}$. This implies, $\bigtriangleup= \sum_{i=0}^{2^d-1} c_i^{2^d}=(\sum_{i=0}^{2^d-1} c_i)^{2^d}$.
\end{proof}

\begin{theorem}{\label{2^d g circulant MDS}}
Let $A$ be a $2^d \times 2^d$ $g$-circulant orthogonal matrix over the finite field $\mathbb{F}_{2^m}$ with $\gcd(g,2^d)=1$, then $A$ is not an MDS matrix.
\end{theorem}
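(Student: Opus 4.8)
The plan is to reduce the statement to the corresponding result for ordinary circulant matrices proved by Gupta and Ray in \cite{GR}, namely that no $2^d\times 2^d$ circulant matrix over $\mathbb{F}_{2^m}$ (with $d\ge 2$) can be simultaneously orthogonal and MDS. The bridge is the factorization supplied by Theorem \ref{structure g circulant}: writing $A=g\text{-circulant}(c_0,\dots,c_{2^d-1})$ we have
$$A=\sum_{i=0}^{2^d-1}c_iQ_gP^{i}=Q_g\,\mathcal{C},\qquad \mathcal{C}=\mathrm{circulant}(c_0,c_1,\dots,c_{2^d-1}),$$
so every $g$-circulant matrix is the left multiple, by the fixed orthogonal permutation matrix $Q_g$, of the ordinary circulant matrix with the same first row. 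I would prove the theorem by transferring both the orthogonality and the (hypothetical) MDS property of $A$ to $\mathcal{C}$ and then invoking the circulant case.

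First I would establish that $A$ is orthogonal if and only if $\mathcal{C}$ is orthogonal. Since $Q_g$ is a permutation matrix, $Q_g^{T}Q_g=I$, and therefore
$$A^{T}A=(Q_g\mathcal{C})^{T}(Q_g\mathcal{C})=\mathcal{C}^{T}Q_g^{T}Q_g\,\mathcal{C}=\mathcal{C}^{T}\mathcal{C}.$$
Hence $A^{T}A=I$ exactly when $\mathcal{C}^{T}\mathcal{C}=I$. This is the step that needs care in view of the caveat recorded in this section: orthogonality can fail to survive a general permutation equivalence, but here the equivalence is realized by the single orthogonal factor $Q_g$ acting on one side, which is exactly what lets the property pass cleanly to $\mathcal{C}$.

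Next, because $A=Q_g\mathcal{C}$ with $Q_g$ a permutation matrix, $A$ and $\mathcal{C}$ are permutation equivalent, so Corollary \ref{per eq are MDS} gives that $A$ is MDS if and only if $\mathcal{C}$ is MDS. Now assume $A$ is orthogonal; then $\mathcal{C}$ is a $2^d\times 2^d$ orthogonal circulant matrix over $\mathbb{F}_{2^m}$. If $A$ were also MDS, then $\mathcal{C}$ would be a $2^d\times 2^d$ circulant matrix that is simultaneously orthogonal and MDS, contradicting the result of Gupta and Ray \cite{GR}. Thus $A$ cannot be MDS.

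I expect the orthogonality transfer to be the only genuinely delicate point: the temptation is to apply permutation equivalence directly, whereas the argument really rests on the explicit one-sided factorization $A=Q_g\mathcal{C}$ through an orthogonal permutation matrix; the rest is bookkeeping and a citation. A more self-contained route, at the cost of length, would bypass \cite{GR} by combining Lemma \ref{det g circulant} (which forces $\sum_i c_i=1$ once $\det A=1$ is deduced from orthogonality) with Lemma \ref{A^2^d g circulant} (which then gives $A^{2^d}=I$) and analysing the autocorrelation identities on $(c_0,\dots,c_{2^d-1})$ imposed by $\mathcal{C}^{T}\mathcal{C}=I$; this essentially reproves the circulant case and is why I would prefer the reduction above.
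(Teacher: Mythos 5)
Your proof is correct, but it is not the route the paper takes for this particular theorem. The paper's own proof stays entirely inside the $g$-circulant class and is self-contained: orthogonality of $A$ gives $R_0\cdot R_j=0$ for the shifts $j\equiv(2k+1)g\pmod{2^d}$; summing these relations over $\mathbb{F}_{2^m}$ yields $(c_0+c_2+\cdots+c_{2^d-2})(c_1+c_3+\cdots+c_{2^d-1})=0$; and Lemma \ref{det g circulant} (the determinant formula $\det A=\bigl(\sum_i c_i\bigr)^{2^d}$, itself a consequence of the scalar identity $A^{2^d}=\bigl(\sum_i c_i^{2^d}\bigr)I$ of Lemma \ref{A^2^d g circulant}) then forces one of the two $2^{d-1}\times 2^{d-1}$ $g$-circulant submatrices formed by the even-indexed or odd-indexed entries to be singular, so $A$ is not MDS. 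In particular, the paper never invokes the Gupta--Ray circulant non-existence theorem here; it reproves it, recovering the circulant case at $g=1$. Your argument instead reduces to that theorem of \cite{GR} as a black box via the factorization $A=Q_g\mathcal{C}$, and your handling of the one delicate point is right: orthogonality does not survive general permutation equivalence, but it does survive a one-sided factor that is itself orthogonal, while the MDS property transfers by Corollary \ref{per eq are MDS}. It is worth knowing that this is essentially the proof the paper itself records in the next section: Theorem \ref{2^d g circulant} gives the same reduction---written as $\mathfrak{C}=\mathcal{C}Q_\rho$, with the permutation factor on the right and the associated circulant having permuted first row---as an ``alternative proof'' valid for the whole class of cyclic matrices, and the paper remarks that this compact route, like yours, forfeits the structural byproducts (the scalar power, the determinant, the explicit singular submatrix) of the direct proof. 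So your approach buys brevity and immediate generality to cyclic matrices; the paper's direct approach buys self-containedness and finer structural information. One shared caveat: the statement must be read with $d\ge 2$, exactly the restriction you carry along from \cite{GR}; for $d=1$ the claim fails (e.g.\ circulant$(a,a+1)$ with $a\notin\{0,1\}$ is a $2\times 2$ orthogonal MDS circulant over $\mathbb{F}_{2^m}$, $m\ge 2$), and the paper's own index set $k=0,1,\ldots,2^{d-2}-1$ is vacuous in that case.
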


\begin{proof}
Let $A=g$-circulant$(c_0,c_1,\hdots ,c_{2^d-1}).$ Let the rows of $A$ are denoted as $R_0,R_1,\cdots, R_{2^{d}-1}$. Since $A$ is orthogonal, $R_0\cdot R_j=0$ for $j=1,2,\hdots,2^{d}-1$. Consider the product $R_0\cdot R_{j}=0$ for $j=\{(2k+1)g \pmod{2^d},k=0,1,\hdots,2^{d-2}-1\}$. These products lead to the following equations:
\begin{eqnarray*}
\sum_{i=0}^{2^d-1}c_ic_{i+1}=0, \sum_{i=0}^{2^d-1}c_ic_{i+3}=0, \sum_{i=0}^{2^d-1}c_ic_{i+5}=0, \cdots, \sum_{i=0}^{2^d-1}c_ic_{i+2^{d-1}-1}=0
\end{eqnarray*}
where suffixes are modulo $2^d$. Adding these equations yields 
\begin{eqnarray*}
(c_0+c_2+\cdots+ c_{2^d-2})(c_1+c_3+\cdots+ c_{2^d-1})=0
\end{eqnarray*}
Note that $g$-circulant$(c_0,c_2,\hdots,c_{2^d-2})$ and $g$-circulant$(c_1,c_3,\hdots,c_{2^d-1})$ are two $2^{d-1} \times 2^{d-1}$ submatrices of $A$. Therefore according to Lemma \ref{det g circulant}, either one of two sub-matrices is singular. Therefore $A$ is not MDS.
\end{proof}

In the case of order $4 \times 4$, there are only two distinct $g$-circulant matrices: the circulant and the left-circulant matrices. However, if the order is $2^d \times 2^d,d>2$, total $2^{d-1}$ different $g$-circulant matrices exists, corresponding to $g=1,3,5,\hdots, 2^{d}-1$. Notably, for $g=1$ we obtain a circulant matrix, and for this, our findings are reduced to the first three results presented in Section $3$ of \cite{GR}. On the other hand, for the case  $g=2^d-1$, resulting matrix is left circulant. In this scenario, Theorem \ref{2^d g circulant MDS} provides a more general proof of the results presented in Section $5.2$ of \cite{LS}. 

In the next section, we discuss about cyclic matrices with orthogonal property.

\section{\bf Results on orthogonal cyclic matrices\label{sec:orthogonal_cyclic} }

We begin this section with an alternative proof of theorem \ref{2^d g circulant MDS} using the fact that the permutation matrices are orthogonal. This proof holds for a more general class, i.e., for cyclic matrices. Although this proof is compact, it does not describe any properties of cyclic matrices like the determinant or scalar structure. But for the sake of completeness, we record the proof here.
\begin{theorem}{\label{2^d g circulant}}
Let $\mathfrak{C}$ be a $2^d \times 2^d$ cyclic orthogonal matrix over $\mathbb{F}_{2^m}$. Then $\mathfrak{C}$ is not an MDS matrix.
\end{theorem}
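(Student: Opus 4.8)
The plan is to reduce the cyclic case to the circulant case via the permutation equivalence established in Theorem \ref{relation cyclic circulant}, and then exploit the orthogonality of the matrix $\mathfrak{C}$ together with the orthogonality of the connecting permutation matrix. Recall that by Theorem \ref{relation cyclic circulant} there is a unique permutation matrix $Q$ with $\mathfrak{C} Q = \mathcal{C}$, where $\mathcal{C} = \text{circulant}(c_{0},c_{\rho(0)},\hdots,c_{\rho^{k-1}(0)})$ is the associated circulant matrix. Since permutation matrices are orthogonal (i.e.\ $Q Q^{T} = I$), I would first observe that $\mathcal{C} = \mathfrak{C} Q$ is itself orthogonal whenever $\mathfrak{C}$ is: indeed $\mathcal{C}\mathcal{C}^{T} = \mathfrak{C} Q Q^{T} \mathfrak{C}^{T} = \mathfrak{C}\mathfrak{C}^{T} = I$. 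Thus orthogonality is transported to the associated circulant matrix, which is the feature of this situation not available for general permutation equivalence (orthogonality is not preserved under arbitrary two-sided permutation multiplication, but it is preserved when we multiply by a single orthogonal permutation matrix on one side).

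Next I would apply the already-established circulant machinery. The associated circulant matrix $\mathcal{C}$ is a $2^{d}\times 2^{d}$ circulant orthogonal matrix over $\mathbb{F}_{2^{m}}$, which is exactly the $g=1$ specialization of Theorem \ref{2^d g circulant MDS} (equivalently, the result of Gupta and Ray in \cite{GR}). Hence $\mathcal{C}$ cannot be MDS. The argument there produces a nontrivial factorization forcing a proper square submatrix to be singular; in particular $\mathcal{C}$ fails the MDS criterion of Definition \ref{cyclic def}'s preceding definitions, so some square submatrix of $\mathcal{C}$ is singular.

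Finally I would transfer non-MDS-ness back to $\mathfrak{C}$. Here I invoke Corollary \ref{per eq are MDS}: since $\mathcal{C} = \mathfrak{C} Q$ with $Q$ a permutation matrix, the MDS property is preserved under permutation equivalence, so $\mathfrak{C}$ is MDS if and only if $\mathcal{C}$ is MDS. As $\mathcal{C}$ is not MDS, neither is $\mathfrak{C}$, completing the proof. The whole argument is compact precisely because the two nontrivial ingredients---preservation of orthogonality under one-sided multiplication by a permutation matrix, and preservation of the MDS property under permutation equivalence---do the work, and the hard analytic content (the scalar/determinant computation for $2^{d}\times 2^{d}$ matrices over characteristic $2$) has already been discharged in the circulant setting.

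The step I expect to be the main obstacle is the first one: verifying carefully that orthogonality of $\mathfrak{C}$ genuinely descends to the associated circulant matrix $\mathcal{C}=\mathfrak{C}Q$. The subtlety is that $\mathfrak{C} = \mathcal{C}Q^{-1}$ is a \emph{one-sided} permutation modification, so one must use $Q^{T}=Q^{-1}$ rather than a two-sided conjugation; the remark in the text that orthogonality is not preserved under general permutation equivalence is precisely the warning that one cannot be cavalier here. Once this single observation is nailed down, the remainder is a direct appeal to Theorem \ref{2^d g circulant MDS} and Corollary \ref{per eq are MDS}.
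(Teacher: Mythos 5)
Your proposal is correct and follows essentially the same route as the paper's own proof: reduce to the associated circulant matrix via Theorem \ref{relation cyclic circulant}, transfer orthogonality using $QQ^{T}=I$, transfer the MDS property via Corollary \ref{per eq are MDS}, and invoke the known non-existence of $2^{d}\times 2^{d}$ circulant orthogonal MDS matrices. The only difference is presentational---you argue contrapositively while the paper argues by contradiction---and your one-sided orthogonality computation $\mathcal{C}\mathcal{C}^{T}=\mathfrak{C}QQ^{T}\mathfrak{C}^{T}=I$ is exactly the step the paper carries out with $\mathfrak{C}=\mathcal{C}Q_{\rho}$.
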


\begin{proof}
Let $\mathfrak{C}$ be a $2^d \times 2^d$ cyclic orthogonal MDS matrix over $\mathbb{F}_{2^m}$. Then $\mathfrak{C}\mathfrak{C}^T=I$. From Theorem \ref{relation cyclic circulant} there exists a permutation matrix $Q$ such that $\mathfrak{C}Q=\mathcal{C}$, where $\mathcal{C}$ is a circulant matrix. Using the fact $Q^{-1}=Q_\rho,$ we get $\mathfrak{C}=\mathcal{C}Q_\rho$. Since $Q_\rho$ is a permutation  matrix, it is orthogonal. Therefore $\mathfrak{C}\mathfrak{C}^T=\mathcal{C}Q_\rho(\mathcal{C}Q_\rho)^T=\mathcal{C}Q_\rho Q_{\rho}^T\mathcal{C}^T=\mathcal{C}\mathcal{C}^T=I$. Therefore $\mathcal{C}$ is a $2^d \times 2^d$ circulant orthogonal matrix. It is also MDS from Corollary \ref{per eq are MDS}. This leads to a contradiction.
\end{proof}

According to Remark $5$ in \cite{GR}, circulant orthogonal MDS matrices of orders other than $2^d \times 2^d$ exist over the finite field of characteristic $2$. For example, $\mathcal{C}_1=$ circulant$(a,1+a^2+a^3+a^4+a^6,a+a^2+a^3+a^4+a^6)$ and $\mathcal{C}_2=$ circulant$(1,1,a,1+a^2+a^3+a^5+a^6+a^7,a+a^5,a^2+a^3+a^6+a^7)$ are two examples of circulant MDS matrices with orthogonal property over the finite field $\mathbb{F}_{2^8}$ with the irreducible polynomial $x^8 + x^4 + x^3 + x + 1 $, where $a$ is a primitive element of the finite field.

Similarly cyclic MDS matrices with orthogonal property can be found for orders other than $2^d \times 2^d$. For instance, in the symmetric group $S_3$, only two cycles of order $3$ exist, which are $\rho_1=(0~1~2)$ and $\rho_2=(0~2~1)$. This cycle $\rho_1$ produce the circulant matrix $\mathcal{C}_1=$ circulant$(a,1+a^2+a^3+a^4+a^6,a+a^2+a^3+a^4+a^6)$ and the cycle $\rho_2$ yields the left-circulant matrix $\mathfrak{C}_{\rho_2}=$ left-circulant$(a,1+a^2+a^3+a^4+a^6,a+a^2+a^3+a^4+a^6)$. Additionally, $\mathfrak{C}_{\rho_2}$ is an orthogonal (involutory) MDS matrix.

In the symmetric group $S_6,$ consider the $6$-cycle $\rho=(0~2~4~3~5~1)$. Consider the cyclic matrix $\mathfrak{C}_{\rho}$ with first row $(1, a^2+a^3+a^6+a^7, 1, 1+a^2+a^3+a^5+a^6+a^7,a, a^5+a)$. Then the matrix is $\mathfrak{C}_{\rho}=\begin{bmatrix}
\substack{1} & \substack{a^2+a^3\\+a^6+a^7} & \substack{1} & \substack{1+a^2+a^3+\\a^5+a^6+a^7} & \substack{a} & \substack{a^5+a}\\
\substack{a^2+a^3\\+a^6+a^7} & \substack{a^5+a} & \substack{1}  & \substack{a} & \substack{1}  & \substack{1+a^2+a^3+\\a^5+a^6+a^7}\\
\substack{a^5+a} & \substack{1+a^2+a^3+\\a^5+a^6+a^7} & \substack{a^2+a^3\\+a^6+a^7} & \substack{1}  & \substack{1}  & \substack{a}\\
\substack{1+a^2+a^3+\\a^5+a^6+a^7} & \substack{a} & \substack{a^5+a} & \substack{1}  & \substack{a^2+a^3\\+a^6+a^7} & \substack{1} \\
\substack{a} & \substack{1}  & \substack{1+a^2+a^3+\\a^5+a^6+a^7} & \substack{a^2+a^3\\+a^6+a^7} & \substack{a^5+a} & \substack{1} \\
\substack{1}  & \substack{1}  & \substack{a} & \substack{a^5+a} & \substack{1+a^2+a^3+\\a^5+a^6+a^7} & \substack{a^2+a^3\\+a^6+a^7}
\end{bmatrix}$. 

According to Theorem \ref{relation cyclic circulant}, the circulant matrix corresponding to $\mathfrak{C}_{\rho}$ is $\mathcal{C}=$ circulant$(1,1,a,1+a^2+a^3+a^5+a^6+a^7,a+a^5,a^2+a^3+a^6+a^7)$. This matrix is orthogonal and MDS. Therefore $\mathfrak{C}_{\rho}$ is also orthogonal and MDS. 

The general result is presented in the following theorem and the proof is straightforward using the identity $PP^T=I$ for any permutation matrix $P$.

\begin{theorem}
Let $\mathfrak{C}$ be a cyclic matrix of order $k \times k,  k\neq 2^d$. Then $\mathfrak{C}$ is orthogonal MDS if and only if the corresponding circulant matrix is orthogonal MDS. 
\end{theorem}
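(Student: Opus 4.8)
The plan is to transfer the orthogonality and the MDS property between $\mathfrak{C}$ and its associated circulant matrix $\mathcal{C}$ simultaneously, exploiting that the intertwining matrix is a permutation matrix and hence is both orthogonal and an agent of permutation equivalence. First I would invoke Theorem \ref{relation cyclic circulant} to obtain the unique permutation matrix $Q$ with $\mathfrak{C}Q=\mathcal{C}$, and rewrite this as $\mathfrak{C}=\mathcal{C}Q_\rho$ using $Q^{-1}=Q_\rho$ from Lemma \ref{structure}.

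For the orthogonality equivalence, I would compute $\mathfrak{C}\mathfrak{C}^T=\mathcal{C}Q_\rho Q_\rho^T\mathcal{C}^T$ and collapse the middle factor via the identity $Q_\rho Q_\rho^T=I$ recorded just after Lemma \ref{structure}, giving $\mathfrak{C}\mathfrak{C}^T=\mathcal{C}\mathcal{C}^T$. Hence $\mathfrak{C}\mathfrak{C}^T=I$ holds precisely when $\mathcal{C}\mathcal{C}^T=I$; since the matrices are square, the relation $\mathfrak{C}^T\mathfrak{C}=I$ (respectively $\mathcal{C}^T\mathcal{C}=I$) follows automatically from the one-sided identity, so orthogonality of $\mathfrak{C}$ and of $\mathcal{C}$ are equivalent.

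For the MDS equivalence, I would appeal to the permutation equivalence $\mathcal{C}=\mathfrak{C}Q$ furnished by Theorem \ref{relation cyclic circulant}. By Corollary \ref{per eq are MDS}, multiplying by a permutation matrix on either side preserves the MDS property, so $\mathfrak{C}$ is MDS if and only if $\mathcal{C}=\mathfrak{C}Q$ is MDS, the reverse direction using $\mathfrak{C}=\mathcal{C}Q^{-1}$. Combining this with the orthogonality equivalence yields that $\mathfrak{C}$ is orthogonal MDS if and only if $\mathcal{C}$ is orthogonal MDS.

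I do not anticipate a genuine obstacle: this is essentially the same short computation used in the proof of Theorem \ref{2^d g circulant}, now read as a two-sided equivalence rather than driven toward a contradiction. The only point requiring care is that both halves of the conclusion must transfer in both directions, which is exactly why the permutation matrix $Q_\rho$ must do double duty, being orthogonal for the first equivalence and inducing permutation equivalence for the second. The hypothesis $k\neq 2^d$ is never actually used in the argument; it merely delimits the regime in which such orthogonal MDS matrices exist, in contrast with the non-existence for $k=2^d$ established in Theorem \ref{2^d g circulant}.
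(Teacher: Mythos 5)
Your proposal is correct and follows exactly the route the paper intends: the paper states only that the proof is ``straightforward using the identity $PP^T=I$ for any permutation matrix $P$,'' and your argument—writing $\mathfrak{C}=\mathcal{C}Q_\rho$, collapsing $Q_\rho Q_\rho^T=I$ to transfer orthogonality, and invoking Corollary \ref{per eq are MDS} to transfer the MDS property—is precisely that computation, read as a two-sided equivalence of the one used in Theorem \ref{2^d g circulant}. Your closing remark that the hypothesis $k\neq 2^d$ is not used in the argument but only marks the regime where such matrices exist is also accurate.
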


\begin{table*}
\centering
\begin{tabular}{||c c c c ||} 
 \hline
 Type of MDS Matrix & Order & Existence & Reference \\ [0.5ex] 
 \hline\hline
 circulant, orthogonal  & $2^d \times 2^d$ & do not exist &  \cite{GR}  \\ 
 \hline
 circulant, orthogonal & other orders & may exist & \cite{GR}  \\
 \hline
 left- circulant, orthogonal &  $2^d \times 2^d$ & do not exist & \cite{LS}  \\
 \hline
g-circulant, orthogonal  & $2^d \times 2^d$ & do not exist & this paper \\
 \hline
 cyclic, orthogonal & $2^d \times 2^d$ & do not exist & this paper\\  
 \hline
 cyclic, orthogonal & other orders & exist & this paper \\  
 \hline
 \end{tabular}\vspace{5pt}

\caption{List of pre-existing results and new results over finite fields of characteristic $2$.
 \label{tab:FPS}}
\end{table*}

\section{\bf Conclusion}
In this article, we focus on the structure of cyclic matrices and their application in the construction of MDS matrices. Cyclic matrices are closely related to circulant matrices and therefore offer flexible and efficient hardware implementations.  This makes the study of cyclic matrices both practically relevant and theoretically significant.
Due to these foundational properties, we study the characteristics of cyclic matrices and present new results on orthogonal $g$-circulant matrices with the MDS property. Additionally, we provide explicit examples of cyclic MDS matrices that also possess the orthogonal property.

 While this article establishes a general structure for cyclic matrices and cyclic orthogonal MDS matrices, it primarily focuses on matrices of size $2^d \times 2^d,$ where $d$ is a positive integer. However, the study of cyclic matrices with orders other than powers of two remains an open and intriguing area of research. Additionally, investigating the generalization of orthogonal properties, as well as involutory and semi-involutory properties in cyclic matrices, presents another promising direction for future study.

\bibliographystyle{plain}

\end{document}